\pgfplotsset{width=7cm,compat=newest}
\newtheorem{theorem}{Theorem}
\newtheorem{proposition}{Proposition}
\newtheorem{corollary}{Corollary}
\newcommand{\nash}[1]{{#1}^{\rm ne}}
\newcommand{\opt}[1]{{#1}^{\rm opt}}
\newcommand{\marg}[1]{{#1}^{\rm mc}}
\newcommand{\bb}[1]{\mathbb{#1}}
\newcommand{\mc}[1]{\mathcal{#1}}
\title{The Unintended Consequences of Minimizing the Price of Anarchy in Congestion Games}
\author{Rahul Chandan \and Dario Paccagnan \and Jason R. Marden}
\begin{document}

\maketitle

\begin{abstract}
This work focuses on the design of taxes in atomic congestion games, a commonly studied model for competitive resource sharing. While most related studies focus on optimizing either the worst- or best-case performance (i.e., Price of Anarchy (PoA) or Price of Stability (PoS)), we investigate whether optimizing for the PoA has consequences on the PoS. Perhaps surprisingly, our results reveal a fundamental trade-off between the two performance metrics. Our main result demonstrates that the taxation rule that optimizes the PoA inherits a matching PoS, implying that the best outcome is no better than the worst outcome under such a design choice.  We then study this trade-off in terms of the Pareto frontier between the PoA and PoS.  Our results also establish that any taxes with PoS equal to 1 incur a much higher PoA, and that, in several well-studied cases, the untaxed setting lies strictly above the Pareto frontier.
\end{abstract}

\section{Introduction}

We consider the design of systems that involve the interactions of strategic users and an underlying shared technological infrastructure.  A major difficulty in designing such systems is that one must account for each user's decision making process in order to guarantee good overall system performance.  The detrimental effects of selfish user behaviour on the performance of these systems have been observed in a variety of contexts, including unfair allocation of essential goods and services~\cite{cookson2016socio,dodorico2019food}, overexploitation of natural resources~\cite{hardin1968tragedy,kaida2016facilitating} and congestion in internet and road-traffic networks~\cite{cabannes2018impact,macfarlane2019your}.  A widely studied approach for influencing the system performance is the use of taxes, which can come in the form of rewards or penalties.  Examples of taxes include taxes levied on users whose decisions have a negative impact on the system performance, and rebates given to users for making decisions aligned with the greater good.

Many of the aforementioned systems can be suitably modeled as \emph{congestion games}~\cite{rosenthal1973class}, where users compete over a set of shared resources and the overall objective is to minimize the sum of users' costs.  In this model, users are assumed to be self interested, focused solely on minimizing their own experienced costs.  In order to measure the inefficiency of users' selfish decision making in such settings, several performance metrics have been proposed, including the \emph{Price of Anarchy}~\cite{koutsoupias1999worst} and the \emph{Price of Stability}~\cite{anshelevich2008price}.  Informally, the Price of Anarchy provides peformance guarantees for the worst system outcome, while the Price of Stability provides guarantees for the best system outcome.  A rich body of literature analyzes these two performance metrics in settings spanning far beyond the class of congestion games.  More recently, a number of works have focused on deriving taxes that optimize the Price of Anarchy in congestion games as a surrogate for optimizing the system performance~\cite{bilo2019dynamic,caragiannis2010taxes,paccagnan2019incentivizing}.  This prompts the following question:
\begin{list}{}{\leftmargin=1cm\rightmargin=1cm}
\item \emph{What are the consequences of optimizing for the Price of Anarchy on other performance metrics such as the Price of Stability?}
\end{list}
Perhaps surprisingly, we prove that there exists a fundamental trade-off between the Price of Anarchy and the Price of Stability in the class of congestion games.

\subsection{Model}

In this work, we consider the class of (atomic) congestion games as defined by Rosenthal~\cite{rosenthal1973class}.  A congestion game consists of a set of users $N=\{1,\dots,n\}$ and a set of resources $\mc{E}$.  Each user $i\in N$ must select an action $a_i$ from a corresponding set of feasible actions $\mc{A}_i\subseteq 2^\mc{E}$.  The cost that a user experiences for selecting a given resource $e\in\mc{E}$ depends only on the total number of users selecting $e$, and is denoted as $\ell_e:\{1,\dots,n\}\to\bb{R}$.  Given an assignment $a=(a_1,\dots,a_n) \in \mc{A}$, where $\mc{A} = \Pi_{i\in N}\mc{A}_i$, each user $i\in N$ experiences a cost equal to the sum over costs on resources $e\in a_i$.  Correspondingly, the system cost is measured by the sum of the users costs, i.e.,
\begin{equation} \label{eq:system_cost}
    {\rm SC}(a) = \sum_{i\in N} \sum_{e \in a_i} \ell_e(|a|_e)
\end{equation}
where $|a|_e$ denotes the number of users selecting resource $e$ in assignment $a$.  Observe that a congestion game can be represented as a tuple $G=(N,\mc{E},\{\mc{A}_i\}_{i\in N},\{\ell_e\}_{e\in\mc{E}})$.  We denote by $\mc{G}^{n,\mc{L}}$ the family of all congestion game instances with a maximum number of users $n$, where all resource cost functions $\{\ell_e\}_{e\in\mc{E}}$ belong to a common family of resource cost functions $\mc{L}$.  To ease the notation, we will use $\mc{G}^{n}$ to refer to the family $\mc{G}^{n,\mc{L}}$ when the dependence on $\mc{L}$ is clear.

\vspace{.2cm}\noindent\emph{Taxes.} In the study of taxes, each resource $e\in\mc{E}$ is associated with a tax function $\tau_e:\{1,\dots,n\}\to \bb{R}$ (positive or negative).  In this case, each user $i\in N$ incurs a cost involving both the resource costs it experiences and the imposed taxes, i.e., 
\begin{equation} \label{eq:user_cost_functions}
    C_i(a) = \sum_{e \in a_i} \Big[ \ell_e(|a|_e)+\tau_e(|a|_e) \Big].
\end{equation} 
We consider taxes that only influence the users' costs and do not factor into the social cost.  Scenarios where taxes are incorporated into the social cost have also been studied in, e.g., \cite{bilo2019dynamic,cole2006much}.  

When users selfishly choose their actions to minimize their incurred costs, an emergent outcome is often described by a \emph{pure Nash equilibrium}.  A pure Nash equilibrium is an assignment $\nash{a}\in\mc{A}$ such that $C_i(\nash{a}) \leq C_i(a_i, \nash{a}_{-i})$ for all $a_i\in\mc{A}_i$ and all $i\in N$, where $a'_i,a_{-i}$ denotes the assignment obtained when user $i$ plays action $a'_i$ and the remaining users continue to play their actions in $a$.  Observe that a system designer can influence the set of pure Nash equilibria through the choice of tax functions $\tau_e$, $e\in\mc{E}$.  Accordingly, with abuse of notation, we augment the tuple representation of a congestion game as $G=(N,\mc{E},\{\mc{A}_i\}_{i\in N},\{\ell_e\}_{e\in\mc{E}},\{\tau_e\}_{e\in\mc{E}})$ which incorporates the imposed taxes on the resources.

We consider the use of \emph{local} taxes to improve the equilibrium performance.  Local taxes only use information about the resource cost function $\ell_e$ to compute the tax function $\tau_e$ on any given resource $e \in \mc{E}$.  The restriction to local taxes is a natural requirement, especially in settings where scalable and computationally simple rules are desirable.  This structure is also commonly utilized in the existing literature, e.g., Pigouvian taxes~\cite{pigou1920economics}.  Accordingly, we define a local taxation rule as a map from the set of admissible resource costs $\mc{L}$ to taxes, i.e., under a given local taxation rule $T$, the tax function associated with each resource $e\in\mc{E}$ is given by $\tau_e = T(\ell_e)$.  For any given family of congestion games $\mc{G}$ without taxes, we denote by $\mc{G}_T$ the corresponding modified family of congestion games with taxes $\tau_e = T(\ell_e)$ on each edge $e\in\mc{E}$.

\vspace{.2cm}\noindent\emph{Performance metrics.}  For any given family of congestion game instances $\mc{G}$, we measure the equilibrium performance using two commonly-studied metrics termed \emph{Price of Anarchy} and \emph{Price of Stability}, respectively defined as
\begin{align}
    {\rm PoA}(\mc{G}) = \sup_{G\in\mc{G}} 
        \max_{a\in{\rm NE}(G)} \frac{{\rm SC}(a)}{{\rm MinCost}(G)}, \label{eq:poa}\\
    {\rm PoS}(\mc{G}) = \sup_{G\in\mc{G}} 
        \min_{a\in{\rm NE}(G)} \frac{{\rm SC}(a)}{{\rm MinCost}(G)}, \label{eq:pos}
\end{align} 
where ${\rm MinCost}(G)$ denotes the minimum achievable social cost for instance $G$ as defined in \eqref{eq:system_cost} and ${\rm NE}(G)$ denotes the set of all pure Nash equilibria in $G$.  It is important to note that the set ${\rm NE}(G)$ must be non-empty for any congestion game $G$, and, thus, that the Price of Anarchy and Price of Stability are well-defined.  This holds since congestion games are potential games \cite{rosenthal1973class}, i.e., there exists a function $\Phi:\mc{A}\to\bb{R}$ such that
\begin{equation} \label{eq:potential_function}
    \Phi(a)-\Phi(a_i,a_{-i}) = C_i(a) - C_i(a'_i, a_{-i}), \quad \forall a_i\in\mc{A}_i, \forall i\in N, \forall a\in\mc{A}.
\end{equation}
Specifically, congestion games admit the following potential function \cite{rosenthal1973class}:
\begin{equation} \label{eq:rosenthal_potential}
    \Phi(a)=\sum_{e\in \cup a_i} \sum^{|a|_e}_{k=1} [ \ell_e(k)+\tau_e(k) ].
\end{equation} 
Observe that the Price of Anarchy provides guarantees on the performance of \emph{any} equilibrium in the set of games while the Price of Stability offers performance guarantees for the \emph{best} equilibrium of any instance in the set.  By definition, ${\rm PoA}(\mc{G}) \geq {\rm PoS}(\mc{G}) \geq 1$ for any family $\mc{G}$.  While we introduce the Price of Anarchy and Price of Stability with respect to pure Nash equilibria, we note that all of our results extend to coarse-correlated equilibria \cite{young2004strategic,roughgarden2015intrinsic}.  Therefore, in the remainder of this work, we use the Price of Anarchy and Price of Stability to refer to the efficiency of pure Nash and coarse-correlated equilibria, equivalently.

\subsection{Summary of our contributions}

In this work, we demonstrate that there exists an inherent trade-off between the Price of Anarchy and the Price of Stability in congestion games, and put forward techniques to study this trade-off.  A discussion of our results is provided below:

\vspace{.2cm}\noindent\emph{Optimizing for anarchy or stability.} Our first contribution characterizes the Price of Stability of the family of congestion games under taxes that minimize the Price of Anarchy.  Specifically, Theorem~\ref{thm:optimal_poa} establishes that taxes achieving minimum Price of Anarchy, ${\rm MinPoA}(n,\mc{L}) = \min_T {\rm PoA}(\mc{G}^{n,\mc{L}}_T)$, have corresponding Price of Stability \emph{equal} to the Price of Anarchy.  The result holds for any family of congestion games with convex, nondecreasing resource costs, including polynomial congestion games.\footnote{Convex, nondecreasing resource cost functions appear in many important problem settings, including congestion in transportation networks \cite{united1964traffic}, diseconomies of scale \cite{pigou1920economics,coase1960problem}, and multiagent coordination \cite{murphey2000target}.}  This result is perhaps unexpected as -- at least in the setting without taxes -- there can exist a significant gap between the Price of Anarchy and Price of Stability. For example, in polynomial congestion games, the Price of Anarchy grows exponentially in the order $d$ \cite{aland2011exact} while the Price of Stability grows only polynomially in $d$ \cite{christodoulou2015price}.  The result in Theorem~\ref{thm:optimal_poa} also implies -- by contrapositive -- that any improvement in the Price of Stability necessarily comes at the expense of the Price of Anarchy, as any rule $T$ with Price of Stability strictly less than ${\rm MinPoA}(n,\mc{L})$ necessarily has Price of Anarchy strictly greater than ${\rm MinPoA}(n,\mc{L})$.

As we review in the forthcoming Section \ref{sec:unilateral_pos}, the Price of Stability of any family of congestion games under marginal cost taxes (i.e., Pigouvian taxation \cite{pigou1920economics}) is equal to 1.  However, in order to establish that a trade-off exists in general, it remains to be shown whether there is a strict separation between the extreme points of the Pareto frontier.  In particular, note that if there is a tax that guarantees a Price of Anarchy equal to 1 for a given family of congestion games, then there can be no trade-off.  Following this line of reasoning, there \emph{must} exist a trade-off between Price of Anarchy and Price of Stability in all sets of congestion games with convex, nondecreasing latency functions (except constant), as it was recently shown by \citet{paccagnan2021congestion} that the best achievable Price of Anarchy among all taxes (local or otherwise) is strictly greater than 1.  Another consequence of combining Theorem~\ref{thm:optimal_poa} with the result in \cite{paccagnan2021congestion} is that the marginal cost rule -- the unique rule with Price of Stability of 1 (Proposition \ref{prop:marginal_unique}) -- can never provide optimal Price of Anarchy.

\vspace{.2cm}\noindent\emph{The anarchy-stability trade-off in congestion games.} Our second contribution goes beyond the commonly studied cases discussed above, characterizing upper and lower bounds on the Pareto frontier corresponding with the Price of Anarchy, Price of Stability trade-off. For a given family of resource cost functions $\mc{L}$ and maximum number of users $n$, we show that any local rule $T$ satisfying a Price of Anarchy requirement ${\rm PoA}(\mc{G}^n_T)\leq\bar{\Pi}$ -- with $\bar{\Pi}\geq{\rm MinPoA}(n,\mc{L})$ -- cannot achieve a Price of Stability below a lower bound ${\rm LB}(\bar{\Pi})$ that is nonincreasing in $\bar{\Pi}$.  Furthermore, the best achievable Price of Stability among all such rules does not exceed an upper bound ${\rm UB}(\bar{\Pi})$, also nonincreasing in $\bar{\Pi}$.  Together, ${\rm UB}(\bar{\Pi})$ and ${\rm LB}(\bar{\Pi})$ -- respectively derived in Theorems \ref{thm:upperbound} and \ref{thm:lower_bound} -- provide a characterization of the Pareto frontier.  Specifically, any joint performance guarantee falling strictly below the lower bound curve is unachievable by any local rule, while those strictly above the upper bound curve are suboptimal, i.e., there exist local taxes providing strict improvement in the Price of Anarchy while guaranteeing at most the same Price of Stability or vice versa.

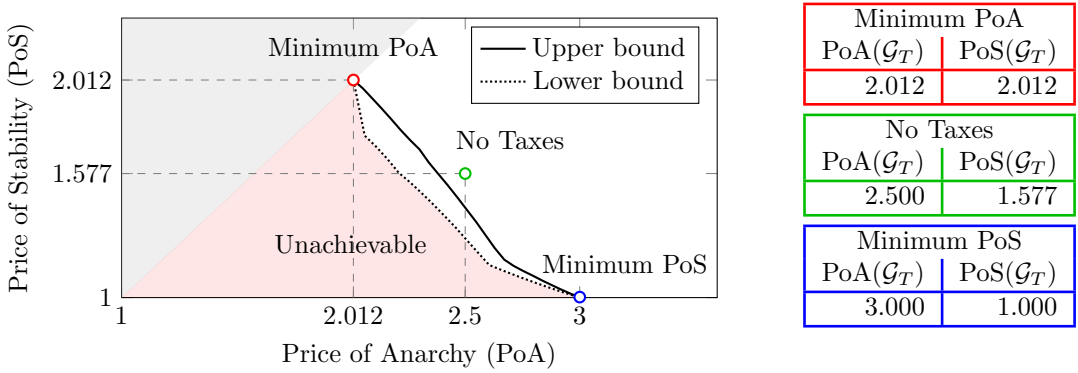
\begin{figure}[b!]
    \centering
    \caption{\emph{The Pareto frontier between the Price of Anarchy and Price of Stability in congestion games with affine and quadratic resource costs.} The Pareto frontier lies within the region below the upper bound curves (solid black) and above the lower bound curves (dotted black), which were derived with the techniques put forward in Theorems \ref{thm:upperbound} and \ref{thm:lower_bound}. The joint Price of Anarchy and Price of Stability values for the mechanism that minimizes the Price of Anarchy (in red), no taxes (in green), and the mechanism that minimizes the Price of Stability (in blue) are reported in the table on the right.  Note that all joint performance guarantees in the region above the upper bound curves are suboptimal, in the pink region below the lower bound curves are unachievable by any local mechanism and in the grey region are inadmissible as ${\rm PoA}(\mc{G}_T)\geq{\rm PoS}(\mc{G}_T)$ must hold.  Although the upper and lower bound curves do not always match, we show that they are tight at the endpoints for any family of convex, nondecreasing resource costs.  Similar bounds on the Pareto frontier can be derived for any family of resource cost functions using the techniques outlined in Section~\ref{sec:bounds}.}
    \label{fig:pospoa}    
    \begin{minipage}{0.7\textwidth}
    \begin{tikzpicture}
        \begin{axis}[
            width=9.5cm,
            height=5.3cm,
            xlabel={Price of Anarchy (${\rm PoA}$)},
            ylabel={Price of Stability (${\rm PoS}$)},
            xtick = {1,2.012,2.5,3},
            xticklabels= {1,2.012,2.5,3},
            ytick = {1,1.577,2.012},
            yticklabels = {1,1.577,2.012},
            legend pos=north east,
            xmin = 1.0,
            xmax = 3.6,
            ymin=1.0, 
            ymax = 2.3,
        ]
            \coordinate (r1) at (2.012,1);
            \coordinate (r2) at (2.5,1);
            \coordinate (r3) at (3,1);
            \coordinate (g1) at (1,2.012);
            \coordinate (g2) at (1,1.577);
            \coordinate (o) at (1,1);
            \coordinate (d) at (3,3);
            \coordinate (s) at (1,3);
            \coordinate (C) at (2.5,1.577);
            \coordinate (minPoA) at (2.012,2.012);
            \filldraw[white!50!gray,opacity=0.25] (o) -- (d) -- (s) -- cycle;
            \filldraw[red,draw=none,opacity=0.1] (o) -- (minPoA) -- (r1) -- cycle;
            \addplot+[no marks,draw=none,fill=red,fill opacity=0.1,forget plot] table {lpos.dat} \closedcycle;
            \addplot+ [no marks,thick,black] table {aupos.dat}
                coordinate [pos=0.0] (A)
                coordinate [pos=1.0] (B);
            \addlegendentry{Upper bound};
            \addplot+ [no marks,thick,densely dotted,black] table {lpos.dat};
            \addlegendentry{Lower bound};
            \addplot+ [only marks,thick,mark options={fill=white},green!75!black,mark=*] coordinates {(2.5,1.577)};
            \draw[dashed,gray] (r1) -- (A);
            \draw[dashed,gray] (g1) -- (A);
            \draw[dashed,gray] (r2) -- (C);
            \draw[dashed,gray] (g2) -- (C);
            \node () at (2,1.25)  {Unachievable};
        \end{axis}
        \draw[red,thick,fill=white]  (A) circle (2pt);
        \node () at (A) [label={[xshift=0.cm, yshift=0.075cm]Minimum ${\rm PoA}$}] {};
        \draw[blue,thick,fill=white]  (B) circle (2pt);
        \node () at (B) [label={[xshift=0.6cm, yshift=0.075cm]Minimum ${\rm PoS}$}] {};
        \node () at (C) [label={[xshift=0.6cm, yshift=0.075cm]No Taxes}] {};
    \end{tikzpicture}
    \end{minipage}
    \begin{minipage}{0.25\textwidth}
    \vspace*{-23pt}
    {\global\arrayrulewidth=1pt}
    \arrayrulecolor{red}
    \begin{tabular}{|r|r|}
        \hline
        \multicolumn{2}{|c|}{Minimum ${\rm PoA}$} \\
        ${\rm PoA}(\mc{G}_T)$ & ${\rm PoS}(\mc{G}_T)$ \\ \hline
        2.012 & 2.012 \\ \hline
    \end{tabular}

    \vspace{2pt}
    \arrayrulecolor{green!75!black}
    \begin{tabular}{|r|r|}
        \hline 
        \multicolumn{2}{|c|}{No Taxes} \\
        ${\rm PoA}(\mc{G}_T)$ & ${\rm PoS}(\mc{G}_T)$ \\ \hline
        2.500 & 1.577 \\ 
        \hline 
    \end{tabular}

    \vspace{2pt}
    \arrayrulecolor{blue}
    \begin{tabular}{|r|r|}
        \hline 
        \multicolumn{2}{|c|}{Minimum ${\rm PoS}$} \\
        ${\rm PoA}(\mc{G}_T)$ & ${\rm PoS}(\mc{G}_T)$ \\ \hline
        3.000 & 1.000 \\ 
        \hline 
    \end{tabular}
    \end{minipage} \\ \vspace{.1cm}(a)~Congestion games with affine resource costs. \\\vspace{.5cm}
    \begin{minipage}{0.7\textwidth}
    \begin{tikzpicture}
        \begin{axis}[
            width=9.5cm,
            height=5.3cm,
            xlabel={Price of Anarchy (${\rm PoA}$)},
            ylabel={Price of Stability (${\rm PoS}$)},
            xtick = {1,5.101,9.583,13},
            xticklabels= {1,5.101,9.583,13},
            ytick = {1,2.361,5.101},
            yticklabels = {1,2.361,5.101},
            legend pos=north east,
            xmin = 1.0,
            xmax = 14.0,
            ymin=1.0, 
            ymax = 6.2,
        ]
            \coordinate (r1) at (5.101,1);
            \coordinate (r2) at (9.5833,1);
            \coordinate (r3) at (13,1);
            \coordinate (g1) at (1,5.101);
            \coordinate (g2) at (1,2.381);
            \coordinate (o) at (1,1);
            \coordinate (d) at (20,20);
            \coordinate (s) at (1,20);
            \coordinate (C) at (9.5833,2.361);
            \coordinate (minPoA) at (5.101,5.101);
            \filldraw[white!50!gray,opacity=0.25] (o) -- (d) -- (s) -- cycle;
            \filldraw[red,draw=none,opacity=0.1] (o) -- (minPoA) -- (r1) -- cycle;
            \addplot+[no marks,draw=none,fill=red,fill opacity=0.1,forget plot] table {lpos2.dat} \closedcycle;
            \addplot+ [no marks,thick,black] table {aupos2.dat}
                coordinate [pos=0.0] (A)
                coordinate [pos=1.0] (B);
            \addlegendentry{Upper bound};
            \addplot+ [no marks,thick,densely dotted,black] table {lpos2.dat};
            \addlegendentry{Lower bound};
            \addplot+ [only marks,thick,mark options={fill=white},green!75!black,mark=*] coordinates {(9.583,2.361)};
            \draw[dashed,gray] (r1) -- (A);
            \draw[dashed,gray] (g1) -- (A);
            \draw[dashed,gray] (r2) -- (C);
            \draw[dashed,gray] (g2) -- (C);
            \node () at (4.5,1.75)  {Unachievable};
        \end{axis}
        \draw[red,thick,fill=white]  (A) circle (2pt);
        \node () at (A) [label={[xshift=0.cm, yshift=0.075cm]Minimum ${\rm PoA}$}] {};
        \draw[blue,thick,fill=white]  (B) circle (2pt);
        \node () at (B) [label={[xshift=-0.67cm, yshift=0.075cm]Minimum ${\rm PoS}$}] {};
        \node () at (C) [label={[xshift=0.cm, yshift=0.075cm]No Taxes}] {};
    \end{tikzpicture}
    \end{minipage}
    \begin{minipage}{0.25\textwidth}
    \vspace*{-24pt}
    {\global\arrayrulewidth=1pt}
    \arrayrulecolor{red}
    \begin{tabular}{|r|r|}
        \hline
        \multicolumn{2}{|c|}{Minimum ${\rm PoA}$} \\
        ${\rm PoA}(\mc{G}_T)$ & ${\rm PoS}(\mc{G}_T)$ \\ \hline
        5.101 & 5.101 \\ \hline
    \end{tabular}

    \vspace{2pt}
    \arrayrulecolor{green!75!black}
    \begin{tabular}{|r|r|}
        \hline 
        \multicolumn{2}{|c|}{No Taxes} \\
        ${\rm PoA}(\mc{G}_T)$ & ${\rm PoS}(\mc{G}_T)$ \\ \hline
        9.583 & 2.361 \\ 
        \hline 
    \end{tabular}

    \vspace{2pt}
    \arrayrulecolor{blue}
    \begin{tabular}{|r|r|}
        \hline 
        \multicolumn{2}{|c|}{Minimum ${\rm PoS}$} \\
        ${\rm PoA}(\mc{G}_T)$ & ${\rm PoS}(\mc{G}_T)$ \\ \hline
        13.000 & 1.000 \\ 
        \hline 
    \end{tabular}
    \end{minipage} \\ \vspace{.1cm}(b)~Congestion games with quadratic resource costs.
\end{figure}

{\global\arrayrulewidth=0.4pt}
\arrayrulecolor{black}

By virtue of the techniques used to derive ${\rm UB}(\bar{\Pi})$ and ${\rm LB}(\bar{\Pi})$, three important observations can be made.  First, for any family of convex, nondecreasing resource costs, the upper and lower bound match at extreme values of $\bar{\Pi}$, i.e., ${\rm LB}(\bar{\Pi})={\rm UB}(\bar{\Pi})$, when $\bar{\Pi}={\rm MinPoA}(n,\mc{L})$ or ${\rm PoS}(\mc{G}^n_T)=1$.  Second, we are able to establish that the joint performance guarantees without taxation are strictly suboptimal for several well-studied families of resource cost functions (Section \ref{sec:efficiency_notax}).  Third, the bounds ${\rm UB}(\bar{\Pi})$ and ${\rm LB}(\bar{\Pi})$ extend to general notions of equilibrium (e.g., coarse-correlated equilibria) and to other performance metrics.\footnote{Specifically, the bounds also apply to the performance of potential minimizers of the game, i.e., assignments satisfying $a^{\rm pm}\in\arg\min_{a\in\mc{A}}\Phi(a)$. The worst-case ratio between ${\rm SC}(a^{\rm pm})$ and ${\rm MinCost}(G)$ across a family of games has previously been studied under the terminology of \emph{Price of Stochastic Anarchy} in, e.g., \cite{chung2008price}.  The Price of Stochastic Anarchy is particularly relevant when considering the performance of noisy learning dynamics that converge to a potential minimizer with probability 1 \cite{alos2010logit,blume1993statistical,marden2012revisiting,young2001individual}.}

In Figure~\ref{fig:pospoa}, we plot the upper bound (solid, black lines) and lower bound curves (dotted, black lines) on the Pareto frontier between the Price of Anarchy and Price of Stability in affine and quadratic congestion games.  We also provide the Price of Anarchy and Price of Stability of the rules that minimize the Price of Anarchy (red circle), the rules that minimize the Price of Stability (blue circle) and no tax (green circle) in the tables on the right.  Under local rules, the best achievable Price of Anarchy ${\rm MinPoA}(n,\mc{L})$ is approximately 2.012 for affine resource costs and 5.101 for quadratic resource costs \cite{paccagnan2019incentivizing}, and -- in accordance with Theorem~\ref{thm:optimal_poa} -- the corresponding Price of Stability values are also 2.012 and 5.101, respectively.

From the tables in Figure~\ref{fig:pospoa}, observe that utilizing either no tax or the rule that minimizes the Price of Stability yields Price of Stability strictly lower than the best achievable Price of Anarchy ${\rm MinPoA}(n,\mc{L})$.  However, they also have strictly higher Price of Anarchy than ${\rm MinPoA}(n,\mc{L})$.  Finally, observe that neither of the rules that unilaterally minimize for either performance metric (red and blue circles) is strictly ``better'' (i.e., Pareto dominant) than the no tax setting (green circle) as they do not guarantee that \emph{both} the Price of Anarchy and Price of Stability of the modified family of games are reduced.  Nevertheless, the joint Price of Anarchy and Price of Stability of no tax falls above the upper bound curves in both plots, implying that Pareto dominant rules do exist in these settings.  Finally, note that joint Price of Anarchy and Price of Stability values below the lower bound curves cannot be achieved by \emph{any} local rule.

\vspace{.2cm}\noindent\emph{Attainable joint performance guarantees.} Our results thus far consider the Price of Anarchy and Price of Stability as independent performance metrics, i.e., as independent worst-case measures of the equilibrium efficiency across the games in a given family of instances.  In particular, we remark that the Price of Anarchy and Price of Stability of a family of games (as defined in \eqref{eq:poa} and \eqref{eq:pos}) are not necessarily attained within the same game instance, and may correspond to two separate game instances in general.  In our final set of contributions, we seek to explore attainable joint performance guarantees by considering only those Price of Anarchy and Price of Stability pairs that are achieved within the same game instance.  For a given local taxation rule, we establish that the attainable joint performance guarantees do not always match those obtained from the independent performance metrics we have considered thus far (Corollary \ref{cor:gamebygame_ub}).  Nonetheless, we show that a trade-off between the Price of Anarchy and Price of Stability persists under the restriction to attainable performance guarantees, and that -- perhaps surprisingly -- the extreme points of the corresponding trade-off curve coincide with those under the independent joint performance measure (Corollary \ref{cor:instance-by-instance}). Specifically, the same game instance \emph{does} achieve the independent joint performance guarantee under taxes that minimize either the Price of Anarchy or Price of Stability.

\subsection{Related works}

The Price of Anarchy was introduced by \citet{koutsoupias1999worst} as a performance metric to characterize the equilibrium efficiency in games.  The first exact characterization of the Price of Anarchy in congestion games was derived independently by \citet{awerbuch2013price} and \citet{christodoulou2005price} for affine congestion games without taxes.  These results were later generalized to all polynomial congestion games without taxes by \citet{aland2011exact}.

Characterizations of the Price of Anarchy without taxes naturally led to the study of taxes to improve worst-case efficiency guarantees.  The design of taxes to optimize equilibrium efficiency guarantees falls under the broader literature on coordination rules introduced by \citet{christodoulou2004coordination}.  Within the context of congestion games, \citet{caragiannis2010taxes} derive local and global congestion-independent rules that minimize the Price of Anarchy for linear resource costs.  For polynomial congestion games, \citet{bilo2019dynamic} consider the class of rules that use only information about a social optimum of each instance.  Among taxation rules of this specialized class, they derive the best achievable Price of Anarchy guarantees in polynomial congestion games, as well as a methodology for computing the optimal taxes.  \citet{paccagnan2021congestion} generalize these results beyond polynomial congestion games and show that the efficiency of optimal, polynomially-computable taxes (using \emph{global} information) matches the corresponding bound on the hardness of approximation.  \citet{paccagnan2019incentivizing} derive \emph{local} taxes that minimize the Price of Anarchy in any class of congestion games, which are shown to have similar efficiency guarantees as the rules using global information from \cite{bilo2019dynamic,caragiannis2010taxes,paccagnan2021congestion}.  \citet{bjelde2017brief} derive upper bounds on the Price of Anarchy associated with the marginal cost rule in polynomial congestion games, which were later refined and generalized in \citep{paccagnan2019incentivizing}.

Aside from the Price of Anarchy, another interesting metric that has been the subject of extensive analysis is the Price of Stability.  The Price of Stability was defined by \citet{anshelevich2008price} (though its study dates back even earlier to, e.g., \citet{schulz2003performance}), who provide an exact characterization of this metric for a specialized class of congestion games.  The exact Price of Stability for linear congestion games without taxes was derived by \citet{caragiannis2011tight} and~\citet{christodoulou2005price2}, followed by an exact characterization for all polynomial congestion games without taxes by \citet{christodoulou2015price}.  \citet{kleer2019tight} study the Price of Anarchy and Price of Stability in affine congestion games under various taxation rules (e.g., altruism, congestion independent taxes).

Though these and many other works on congestion games study the Price of Anarchy and the Price of Stability independently, there is no systematic framework for analyzing the concurrent optimization of these two performance metrics.  For example, one may wish to identify the taxes that optimize the Price of Stability while guaranteeing that the Price of Anarchy remains below some maximum acceptable value.  The insights developed in this paper directly relate to the design of optimal taxes in congestion games, accounting for possible trade-offs between the worst and best case performance metrics.  A natural concern in the joint study of the Price of Anarchy and Price of Stability under taxes is whether optimizing for the worst-case equilibrium's efficiency has negative consequences on the best-case equilibrium's efficiency.  

To the best of our knowledge, only two prior works have proposed the study of trade-offs between the Price of Anarchy and Price of Stability, albeit in specialized settings.  \citet{filos2019pareto} study the trade-off between the Price of Anarchy and the Price of Stability for unrelated machine scheduling.  Before that, the existence of a trade-off between these two performance metrics was also discovered by \citet{ramaswamy2017impact} for the class of covering games \cite{gairing2009covering}.  These two prior works both answer the above concern in the affirmative, providing characterizations of the Pareto frontier between the Price of Anarchy and Price of Stability within their respective problem settings.  Interestingly, though they consider specialized classes of problems, the two works show that the rules/mechanisms that optimize the Price of Anarchy have Price of Anarchy equal to the Price of Stability, which is mirrored by the result in Theorem \ref{thm:optimal_poa}.  In this manuscript, we investigate the same research direction as these two prior studies in the much broader class of congestion games, and show that their findings hold more generally.  We also extend the study of the Price of Anarchy and Price of Stability to attainable joint performance guarantees.

\section{Preliminaries} \label{sec:prelims}

In the literature on congestion games, the resource cost functions are often taken to be polynomials with nonnegative coefficients \cite{aland2011exact,bilo2019dynamic,caragiannis2010taxes,christodoulou2005price,christodoulou2015price,paccagnan2019incentivizing}, i.e., 
\[ \ell_e(x) = \alpha_1 + \alpha_2 x + \dots + \alpha_{d+1} x^d, \quad \forall e\in \mc{E}, \]
where $\alpha_j\geq 0$, $j=1,\dots,d+1$.  Note that one could equivalently consider the family of congestion games with resource cost functions $\ell_e\in \mc{L}$ where $\mc{L}$ is the set of all linear combinations with nonnegative coefficients, $\ell(x)=\sum^{d+1}_{j=1} \alpha_j b_j(x)$, of the set of polynomial basis functions $b_j(x) = x^{j-1}$, $j=1,\dots,d+1$.  Throughout this paper, we consider a broader family of congestion games corresponding to the family of resource cost functions $\mc{L}=\text{span}(b_1,\dots,b_m)$ containing all linear combinations with nonnegative coefficients of a set of basis functions $b_1,\dots,b_m$.

\subsection{Minimizing the Price of Anarchy}
\citet{paccagnan2019incentivizing} provide a linear programming based methodology to compute the local taxation rule that minimizes the Price of Anarchy in a given congestion game.  For the reader's convenience, we reproduce this linear programming methodology in the following proposition:

\begin{proposition}[Theorem~2.1~\citep{paccagnan2019incentivizing}] \label{prop:optimal_poa}
Consider the family of resource cost functions $\mc{L}=\text{span}(b_1,\dots,b_m)$ corresponding to basis functions $b_1, \dots, b_m$, and maximum number of users $n$.  Define 
\begin{equation} \label{eq:In}
    \mc{I}(n) := \left\{ (x,y,z) \in \{0,1,\dots,n\}^3 \text{ s.t. } 1\leq x+y-z\leq n \text{ and } z \leq \min\{x,y\} \right\},
\end{equation}
and let $\rho^{\rm PoA}_j\in\bb{R}$, $F^{\rm PoA}_j:\{1,\dots,n\}\to\bb{R}$ solve the following $m$ linear programs (one for each $b_j$):\footnote{Note that further reductions to the set $\mc{I}(n)$ are possible (see, e.g., \cite{paccagnan2019incentivizing}), but are omitted for conciseness.}
\begin{equation} \label{linprog:optimalpoa}
\begin{aligned}
    \underset{F,\rho}{\text{maximize}} \quad & \rho \\
    \text{subject to:} \quad & b_j(y)y-\rho b_j(x)x + F(x)(x-z) - F(x+1)(y-z) \geq 0,
        \quad \forall (x,y,z) \in \mc{I}(n),
\end{aligned}
\end{equation}
where we define $b_j(0)=F(0)=F(n+1)=0$.  It holds that a local taxation rule $T^{\rm PoA} \in \arg\min_T {\rm PoA}(\mc{G}^n_T)$ is given by
\begin{equation} \label{def:minimize_poa}
    T^{\rm PoA}(\ell) = \sum^m_{j=1} \alpha_j \tau^{\rm PoA}_j, 
        \quad \text{where } \tau^{\rm PoA}_j:\{1,\dots,n\}\to\bb{R},
        \quad \tau^{\rm PoA}_j(x)=F^{\rm PoA}_j(x)-b_j(x),
\end{equation}
and the corresponding Price of Anarchy is ${\rm PoA}(\mc{G}^n_{T^{\rm PoA}})=\max_j \{1/\rho^{\rm PoA}_j\} =: {\rm MinPoA}(n,\mc{L})$.
\end{proposition}

\noindent The constraints in \eqref{linprog:optimalpoa} -- and in forthcoming convex programs -- stem from efficient game parameterizations and smoothness conditions that allow the problem of deriving a bound on the Price of Anarchy or Price of Stability of a family of games to be recast as a tractable optimization problem (see, e.g., \cite{bilo2018unifying,chandan2020optimal,nadav2010limits}).  Under the parameterization, each constraint corresponds to a specific game structure within the family of games.  In some settings, including that of \eqref{linprog:optimalpoa}, such bounds can be shown to be tight for broad classes of games.

\subsection{Minimizing the Price of Stability} \label{sec:unilateral_pos}
Along with the local taxation rules that minimize the Price of Anarchy, we are naturally also interested in local taxes that minimize the Price of Stability. In the next proposition, we provide an intuitive argument to show that the marginal cost rule is the unique local taxation rule that minimizes the Price of Stability.

\begin{proposition} \label{prop:marginal_unique}
Consider the family of resource cost functions $\mc{L}=\text{span}(b_1,\dots,b_m)$ corresponding to positive, nondecreasing basis functions $b_1, \dots, b_m$, and maximum number of users $n$.  Then, the marginal cost rule
\[   T^{\rm mc}(\ell) = \sum^m_{j=1} \alpha_j \tau^{\rm mc}_j, 
        \quad \text{where } \tau^{\rm mc}_j:\{1,\dots,n\}\to\bb{R},
        \quad \tau^{\rm mc}_j(x)=(x-1)[b_j(x)-b_j(x-1)] \]
is the unique local taxation rule (up to rescaling) with ${\rm PoS}(\mc{G}^n_{\marg{T}})=1$.
\end{proposition}

\begin{proof}
We prove the claim in two parts: (i) show that any local taxation rule $T\neq T^{\rm mc}$ has ${\rm PoS}(\mc{G}^n_T)>1$; and (ii) prove that an optimal assignment is an equilibrium under $T^{\rm mc}$.

\emph{Part~(i):} Assume, by contradiction, that there exists a local taxation rule $T$ with ${\rm PoS}(\mc{G}^n_T)=1$ with $T(\ell)(k)>T^{\rm mc}(\ell)(k)$ for some integer $1\leq k \leq n$ and resource cost function $\ell\in\mc{L}$.  Consider the game $G$ with user set $N=\{1,\dots,k\}$ and two resources $\mc{E} = \{e_0, e_1\}$.  The resource $e_0$ has resource cost $\ell(x)$ and resource $e_1$ has resource cost $[\ell(k)k-\ell(k-1)(k-1)+\epsilon]\cdot\ell(x)$ for $x=1,\dots,k$ where $0<\epsilon<T(\ell)(k)-T^{\rm mc}(\ell)(k)$.  Every user $i\in\{1,\dots,k-1\}$ has only one action, $a_i=\{e_0\}$, while user $k$ has action set $\mc{A}_k=\{a_k,a'_k\}$, where $a_k=\{e_0\}$ and $a'_k=\{e_1\}$.  Observe that if $T(\ell)(k)>T^{\rm mc}(\ell)(k)$, then the unique pure Nash equilibrium corresponds with when user $k$ selects $a'_k$ resulting in social cost $\ell(k-1)(k-1)+\ell(k)k-\ell(k-1)(k-1)+\epsilon$.  Thus, the Price of Stability in this game is $[\ell(k)k+\epsilon]/[\ell(k)k]>1$, which contradicts ${\rm PoS}(\mc{G}^n_T)=1$.  We conclude this part by observing that a similar argument holds for $T(\ell)(k)<T^{\rm mc}(\ell)(k)$, when the resource cost of $e_1$ is $[\ell(k)k-\ell(k-1)(k-1)-\epsilon]\cdot\ell(x)$ for $x=1,\dots,k$ and $0<\epsilon<T^{\rm mc}(\ell)(k)-T(\ell)(k)$. In this case, user $k$'s Nash action is $a_k$ and the Price of Stability is $\ell(k)k/[\ell(k)k-\epsilon]>1$.

\emph{Part (ii):} Consider an optimal assignment $\opt{a}$ in a given game $G$.  It is straightforward to show that this assignment must be an equilibrium under $T^{\rm mc}$:
\begin{align*}
    C_i(\opt{a})-C_i(a_i,\opt{a}_{-i})
    =\> & \sum_{e\in\opt{a}_i} \Big[\ell_e(|\opt{a}|_e)|\opt{a}|_e-\ell_e(|\opt{a}|_e-1)(|\opt{a}|_e-1)\Big] \\
        & \qquad - \sum_{e\in a_i}\Big[\ell_e(|a_i,\opt{a}_{-i}|_e)|a_i,\opt{a}_{-i}|_e-\ell_e(|a_i,\opt{a}_{-i}|_e-1)(|a_i,\opt{a}_{-i}|_e-1)\Big] \\
    =\> & \sum_{e\in\opt{a}_i\setminus a_i}\ell_e(|\opt{a}|_e)|\opt{a}|_e + \sum_{e\in a_i\setminus\opt{a}_i} \ell_e(|a_i,\opt{a}_{-i}|_e-1)(|a_i,\opt{a}_{-i}|_e-1) \\
        & \qquad - \sum_{e\in\opt{a}_i\setminus a_i} \ell_e(|\opt{a}|_e-1)(|\opt{a}|_e-1) - \sum_{e\in a_i\setminus\opt{a}_i} \ell_e(|a_i,\opt{a}_{-i}|_e)|a_i,\opt{a}_{-i}|_e] \\
    =\> & \sum_{e\in\mc{E}}\ell_e(|\opt{a}|_e)|\opt{a}|_e-\sum_{e\in\mc{E}}\ell_e(|a_i,\opt{a}_{-i}|_e)|a_i,\opt{a}_{-i}|_e \\
    =\> & {\rm MinCost}(G)-{\rm SC}(a_i,\opt{a}_{-i}),
\end{align*}
where the third equality holds because we add and subtract $\ell_e(|\opt{a}|_e)|\opt{a}|_e$ for all $e\in\mc{E}\setminus(\opt{a}_i\cup a_i)$ and all $e\in\opt{a}_i\cap a_i$.  The final line must be nonpositive for all actions $a_i\in\mc{A}_i$ and all users $i\in N$ by the definition of ${\rm MinCost}(G)$, concluding the proof.
\end{proof}

In the above discussion, we identified the local taxes that unilaterally minimize the Price of Anarchy and the Price of Stability. In the next section, we consider the consequences of the unilateral minimization of either of these performance metrics on the value of the other.

\section{Optimizing for anarchy or stability} \label{sec:unilateral}

We first seek to investigate how the performance of the best case equilibria is affected when we optimize for the worst case equilibrium performance.  In the next result, we prove that any local taxation rule $T$ that minimizes the Price of Anarchy has corresponding Price of Stability equal to the Price of Anarchy in congestion games with convex, nondecreasing resource cost functions.  Additionally, we show that linear taxation rules (i.e., local rules $T$ that satisfy $T(\sum^m_{j=1} \alpha_j b_j) = \sum^m_{j=1} \alpha_j T(b_j)$) are Pareto optimal over all possible local taxation rules, i.e., for any (possibly nonlinear) local rule $T$, there exists a linear rule $T^{\rm lin}$ such that ${\rm PoA}(\mc{G}^n_{T^{\rm lin}}) \leq {\rm PoA}(\mc{G}^n_{T})$ and ${\rm PoS}(\mc{G}^n_{T^{\rm lin}}) \leq {\rm PoS}(\mc{G}^n_{T})$.

\begin{theorem} \label{thm:optimal_poa}
Consider the family of resource cost functions $\mc{L}=\text{span}(b_1,\dots,b_m)$ corresponding to convex, nondecreasing basis functions $b_1, \dots, b_m$, and maximum number of users $n$.  The following statements hold:

\vspace{.2cm}\noindent i)~Let $T^{\rm lin}$ denote a Pareto optimal rule in the set of all linear taxation rules.  Then, $T^{\rm lin}$ is Pareto optimal over all (possibly nonlinear) local taxation rules.

\vspace{.2cm}\noindent ii)~Let $T^{\rm PoA}$ denote the rule that minimizes the Price of Anarchy as defined in \eqref{def:minimize_poa} where $(F^{\rm PoA}_j, \rho^{\rm PoA}_j)$, $j=1,\dots,m$, are solutions to the $m$ linear programs in \eqref{linprog:optimalpoa} (one for each $b_j$).  It holds that ${\rm PoS}(\mc{G}^n_{T^{\rm PoA}}) = {\rm PoA}(\mc{G}^n_{T^{\rm PoA}})$.  Furthermore, the functions $F^{\rm PoA}_j$, $j=1,\dots,m$, are nondecreasing and unique up to rescaling.
\end{theorem}
The proof is presented in \ref{proof:optimal_poa}.  We highlight that the performance guarantee ${\rm PoS}(\mc{G}^n_{T^{\rm PoA}}) = {\rm PoA}(\mc{G}^n_{T^{\rm PoA}})$ in Theorem \ref{thm:optimal_poa}ii) is achieved by the same game instance $G\in\mc{G}^n_{T^{\rm PoA}}$.  Moreover, the instance $G$ is a simple, $n$-user game in which each user has 2 single-selection actions and there is a unique pure Nash equilibrium.  We depict the structure of the worst-case game instance as a graph in Figure~\ref{fig:singleton} where the users are represented by the edges, and the resources are the nodes.

\begin{figure}[tb]
    \centering
    \includegraphics[width=0.5\textwidth,trim={0 5.75cm 13.5cm 0}]{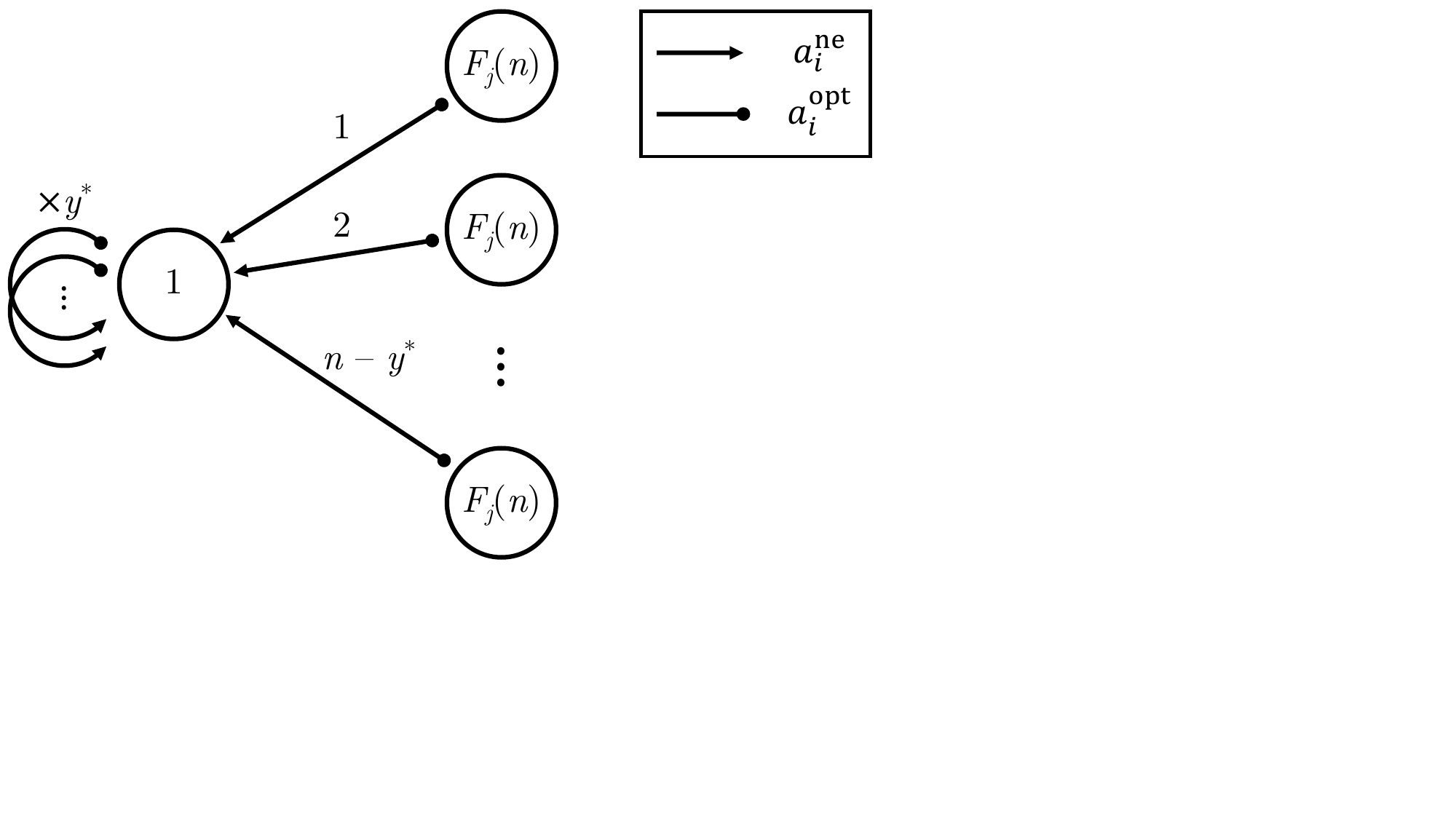}
    \caption{\emph{The worst-case game structure satisfying ${\rm PoA}(G)={\rm PoS}(G)={\rm PoA}(\mc{G}^n_{T^{\rm PoA}})={\rm PoS}(\mc{G}^n_{T^{\rm PoA}})$ in Theorem \ref{thm:optimal_poa}.}  Consider the local taxation rule $T^{\rm PoA}$ that minimizes the Price of Anarchy as described in Proposition \ref{prop:optimal_poa}.  As shown in Theorem \ref{thm:optimal_poa}, the resulting Price of Stability is equal to the minimum Price of Anarchy, i.e., ${\rm PoS}(\mc{G}^n_{T^{\rm PoA}})={\rm PoA}(\mc{G}^n_{T^{\rm PoA}})$.  We depict the worst-case game structure with the graph above where the $n$ edges are the users and the $1\leq n-y^*+1\leq n+1$ nodes are the resources.  Each resource $e\in\mc{E}$ has resource cost function $\ell_e(x) = \alpha_e \cdot b_j(x)$, where $\alpha_e\geq 0$ is the value of the node (either 1 or $F_j(n)$) and $b_j$ is one of the basis functions.  Each user $i\in N$ has two single-selection actions, i.e., $\mc{A}_i=\{\nash{a}_i,\opt{a}_i\}$.  In the above depiction, the arrow (resp. round) tip of each edge $i\in N$ indicates the resource $i$ selects in $\nash{a}_i$ (resp. $\opt{a}_i$).  Observe that all $n$ users select the left resource in the joint action $\nash{a}=(\nash{a}_1,\dots,\nash{a}_n)$ which is the unique equilibrium action since the functions $\opt{F}_j$'s are nondecreasing (by Theorem \ref{thm:optimal_poa}ii)).  In contrast, $y^*$ users select the left resource and the remaining $n-y^*$ users select individual resources in the joint action $\opt{a}=(\opt{a}_1,\dots,\opt{a}_n)$ which is the optimal action.  To obtain the result, we select $y^*\in\{0,1,\dots,n\}$ and $b_j\in\{b_1,\dots,b_m\}$ that maximize ${\rm SC}(\nash{a})/{\rm SC}(\opt{a})$.}
    \label{fig:singleton}
\end{figure}

Alternatively, one might wish to understand how unilaterally minimizing the Price of Stability impacts the Price of Anarchy.  In the previous section, we showed that the marginal cost rule is the unique local rule that achieves the minimum Price of Stability of 1 in congestion games.  \citet{paccagnan2019incentivizing} provide a tractable linear program for computing the Price of Anarchy in congestion games under the marginal cost rule.  We reproduce their result in the following proposition:

\begin{proposition}[Corollary~6.1~\cite{paccagnan2019incentivizing}] \label{prop:optimal_pos}
Consider the family of resource cost functions $\mc{L}=\text{span}(b_1,\dots,b_m)$ corresponding to convex, nondecreasing basis functions $b_1, \dots, b_m$, and maximum number of users $n$.  Then, the marginal cost rule has Price of Anarchy ${\rm PoA}(\mc{G}^n_{T^{\rm mc}})=1/\rho^{\rm mc}$, where $\rho^{\rm mc}$ is the optimal value of
\begin{equation} \label{linprog:marginal_contribution}
\begin{aligned}
    & \underset{\rho, \nu\geq 0}{maximize} \quad \rho \quad \text{subject to:} \\
    & b_j(y)y-\rho b_j(x)x + \nu[(x^2-xy)b_j(x)-x(x-1)b_j(x-1)-y(x+1)b_j(x+1)] \geq 0, \\
    & \hspace*{200pt} x,y \in \{0,\dots,n\}, \quad x+y\leq n, \quad \forall j\in\{1,\dots,m\}, \\
    & b_j(y)y-\rho b_j(x)x + \nu[xb_j(x)(2n-x-y)+(x-1)b_j(x-1)(y-n-(x+1)b_j(x+1)(x-n)] \geq 0, \\
    & \hspace*{200pt} x,y \in \{0,\dots,n\}, \quad x+y> n, \quad \forall j\in\{1,\dots,m\},
\end{aligned}    
\end{equation}
where we define $b_j(-1)=b_j(0)=b_j(n+1)=0$.
\end{proposition}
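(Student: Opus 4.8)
The plan is to obtain this statement as the specialization, to the marginal cost mechanism, of the primal--dual characterization of the price of anarchy of a \emph{fixed} local mechanism that underlies \citet{paccagnan2019incentivizing} and Proposition~\ref{prop:optimal_poa}: prove ${\rm PoA}(T^{\rm mc})\le 1/\rho^{\rm mc}$ by summing the equilibrium inequalities, and the matching lower bound by building a family of worst-case congestion games read off from an optimal dual solution of \eqref{linprog:marginal_contribution}. For the upper bound, note first that under $T^{\rm mc}$ the cost a user perceives on a resource with cost $b_j$ carrying $x$ users is the marginal contribution $F_j(x):=x\,b_j(x)-(x-1)\,b_j(x-1)$, with $F_j(0)=0$, and that the convexity hypothesis on the $b_j$ makes $x\mapsto x\,b_j(x)$ discretely convex, hence $F_j$ nondecreasing. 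Fix a game, a pure Nash equilibrium $\nash{a}$, and an optimal assignment $\opt{a}$, and set $x_e=|\nash{a}|_e$, $y_e=|\opt{a}|_e$, and $z_e=\#\{i: e\in\nash{a}_i\cap\opt{a}_i\}$. Summing $C_i(\nash{a})\le C_i(\opt{a}_i,\nash{a}_{-i})$ over $i\in N$ and collecting terms resource by resource yields the exact inequality $\sum_{e\in\mc{E}}\big[(x_e-z_e)F_e(x_e)-(y_e-z_e)F_e(x_e+1)\big]\le 0$, because each of the $y_e-z_e$ deviators joining $e$ faces $x_e+1$ users and each of the $z_e$ already on $e$ faces $x_e$.

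Introducing a multiplier $\nu\ge 0$ on this inequality, it then suffices to exhibit $\rho,\nu\ge 0$ with
\[
y\,b_j(y)-\rho\,x\,b_j(x)+\nu\big[(x-z)F_j(x)-(y-z)F_j(x+1)\big]\ \ge\ 0
\]
for every basis $b_j$ and every triple satisfying $z\le\min\{x,y\}$ and $x+y-z\le n$: weighting by the nonnegative coefficients $\alpha_j$ and summing over $\mc{E}$ gives $\rho\cdot{\rm SC}(\nash{a})\le{\rm SC}(\opt{a})={\rm MinCost}$, i.e. ${\rm PoA}(T^{\rm mc})\le 1/\rho$. It remains to eliminate $z$ and recognize \eqref{linprog:marginal_contribution}: the left-hand side above is nondecreasing in $z$ (its increment in $z$ equals $\nu[F_j(x+1)-F_j(x)]\ge 0$), so the constraint need only be imposed at the smallest feasible $z$, namely $z=0$ when $x+y\le n$ and $z=x+y-n$ when $x+y>n$ -- which is exactly the two-case split of \eqref{linprog:marginal_contribution}. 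Substituting $F_j(x)=x\,b_j(x)-(x-1)\,b_j(x-1)$ and $F_j(x+1)=(x+1)\,b_j(x+1)-x\,b_j(x)$ into these two cases and rearranging produces precisely its two constraint families, so taking $\rho=\rho^{\rm mc}$ optimal gives ${\rm PoA}(T^{\rm mc})\le 1/\rho^{\rm mc}$.

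For the matching lower bound I would use LP strong duality: $\rho^{\rm mc}$ is attained by the dual of \eqref{linprog:marginal_contribution}, whose optimal solution puts nonnegative weight on finitely many constraints, i.e. on finitely many resource ``types'' $(b_j,x,y)$. The standard layered/replicated construction of this literature (cf.\ \citet{paccagnan2019incentivizing}) then turns these weights into a concrete congestion game -- many parallel copies of each active resource type, arranged so that a designated profile is a pure Nash equilibrium under $T^{\rm mc}$ while the optimum uses the complementary profile -- with complementary slackness forcing the deviation inequalities tight on the active resources, so that ${\rm SC}(\nash{a})/{\rm MinCost}\to 1/\rho^{\rm mc}$ as the number of copies grows. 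The extension to coarse correlated equilibria is immediate, since the upper bound uses only the aggregate (expectation-stable) inequality and the lower-bound instances use pure equilibria.

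I expect the lower-bound construction -- the explicit worst-case game, the verification of the equilibrium conditions, and the bookkeeping across the two regimes $x+y\le n$ and $x+y>n$ -- to be the delicate part; the upper bound is essentially a guided computation once the marginal-cost perceived-cost identity $F_j$ is in hand. A shortcut that avoids the construction altogether is to invoke the general theorem of \citet{paccagnan2019incentivizing} that characterizes ${\rm PoA}(T)$, for an arbitrary local mechanism $T$, as the reciprocal of an LP optimum, instantiate it at $T=T^{\rm mc}$, and carry out only the $z$-elimination above to recognize \eqref{linprog:marginal_contribution}.
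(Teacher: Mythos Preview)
The paper does not supply its own proof of this proposition: it is quoted as Corollary~6.1 of \citet{paccagnan2019incentivizing} ``for the reader's convenience,'' with no accompanying argument, so there is no in-paper proof to compare against.

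Your proposal is nonetheless correct and matches the derivation in the cited reference. In fact, the ``shortcut'' you describe at the end is the actual proof rather than an alternative: the general LP characterization of ${\rm PoA}(T)$ for a \emph{fixed} local mechanism (the content behind Proposition~\ref{prop:optimal_poa}, with $F$ held fixed and a free multiplier $\nu\ge 0$) is instantiated at $F_j(x)=x\,b_j(x)-(x-1)\,b_j(x-1)$, and then $z$ is eliminated exactly as you do---the constraint is tightest at $z=0$ for $x+y\le n$ and at $z=x+y-n$ otherwise, because convexity of $b_j$ makes $F_j$ nondecreasing. The same $z$-elimination appears verbatim later in this paper (Appendix~\ref{proof:Marg_arbitrary_users}). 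The tightness (lower-bound) half is already contained in the tightness assertion of Proposition~\ref{prop:optimal_poa} for arbitrary fixed $F$, so a separate game construction is not needed here; your dual-based outline would reproduce that same construction.

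One caution: the LP \eqref{linprog:marginal_contribution} as reproduced in this paper appears to carry typographical errors (the sign on the $xy\,b_j(x)$ term in the first constraint family, and an unbalanced parenthesis in the second), so your substitution will not literally recover the printed display. The correct expressions follow from expanding $xF_j(x)-yF_j(x+1)$ and $(n-y)F_j(x)-(n-x)F_j(x+1)$ with $F_j(x)=x\,b_j(x)-(x-1)\,b_j(x-1)$; trust your computation over the reproduction.
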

Thus, we have established that the local taxation rules that unilaterally minimize the Price of Anarchy and Price of Stability are linear and unique (Proposition \ref{prop:optimal_pos} and Theorem \ref{thm:optimal_poa}), and have provided characterizations of the corresponding Price of Stability and Price of Anarchy, respectively.

\section{The anarchy-stability trade-off in congestion games} \label{sec:bounds} 

In the previous sections, we showed that the unique local taxation rule that minimizes the Price of Anarchy has corresponding Price of Stability equal to the Price of Anarchy (Theorem \ref{thm:optimal_poa}).  Furthermore, the well-known marginal cost rule is the unique local taxation rule that minimizes the Price of Stability, always achieving a Price of Stability of 1 (Proposition \ref{prop:marginal_unique}).  Since the minimum achievable Price of Anarchy is strictly greater than 1 for all nondecreasing, convex resource costs (except constant) \cite{paccagnan2021congestion}, it immediately follows that the taxation rule that minimizes the Price of Anarchy is distinct from the taxation rule that minimizes the Price of Stability and, thus, there must exist a trade-off between these two metrics.  In this section, we develop analytical techniques for deriving upper and lower bounds on the Pareto frontier between the Price of Anarchy and Price of Stability in congestion games, which permit us to better understand the trade-off between these two metrics. Though the results in this section depend on an upper-bound, $n$, on the number of users, we discuss how these techniques can be extended to remove the dependence on $n$ in \ref{sec:arbitrary}.

\subsection{The upper bound} \label{sec:upper_bound}
Before presenting our upper bound, we introduce a modified version of the smoothness argument in \citet{christodoulou2015price} that provides upper bounds on the Price of Stability of congestion games.  Recall that all congestion games are potential games and admit the potential function $\Phi:\mc{A}\to\bb{R}$ in \eqref{eq:rosenthal_potential}.

\begin{proposition} \label{prop:smoothness_argument}
Let $\mc{G}$ denote any family of congestion games, and suppose that there exist $\zeta > 0$, $\lambda > 0$ and $\mu < 1$ such that, for every game $G\in\mc{G}$ and any two assignments $a,a' \in \mc{A}$, it holds that
\begin{equation} \label{eq:smoothness_argument}
    {\rm SC}(a) + \sum_{i\in N} [ C_i(a'_i, a_{-i})-C_i(a) ] + \zeta [ \Phi(a')-\Phi(a) ] \leq \lambda {\rm SC}(a') + \mu {\rm SC}(a).
\end{equation}
Then, the Price of Stability satisfies ${\rm PoS}(\mc{G}) \leq \lambda/(1-\mu)$.
\end{proposition}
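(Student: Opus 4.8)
The plan is to certify the bound instance by instance: for every game $G\in\mc{G}$ I will exhibit one particular pure Nash equilibrium whose social cost is at most $\tfrac{\lambda}{1-\mu}\,{\rm MinCost}(G)$, and then take the supremum over $G\in\mc{G}$. The natural candidate is the \emph{potential minimizer}. So, fix $G\in\mc{G}$, let $\hat{a}\in\mc{A}$ minimize the Rosenthal potential $\Phi$ over the (finite) set $\mc{A}$, and let $\opt{a}\in\mc{A}$ attain ${\rm MinCost}(G)$. Two elementary facts about $\hat{a}$ do all the work. First, $\hat{a}$ is a pure Nash equilibrium: since $\Phi$ is an exact potential, any unilateral deviation that strictly lowers a player's cost lowers $\Phi$ by the same amount, contradicting minimality; hence $C_i(\hat{a})\le C_i(a'_i,\hat{a}_{-i})$ for every $i$ and every $a'_i\in\mc{A}_i$, and in particular $\sum_{i=1}^{n}\big[C_i(\opt{a}_i,\hat{a}_{-i})-C_i(\hat{a})\big]\ge 0$. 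Second, since $\hat{a}$ minimizes $\Phi$ globally, $\Phi(\opt{a})-\Phi(\hat{a})\ge 0$, so $\zeta\big[\Phi(\opt{a})-\Phi(\hat{a})\big]\ge 0$ because $\zeta>0$.

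Next I would instantiate the hypothesized inequality~\eqref{eq:smoothness_argument} at $a=\hat{a}$ and $a'=\opt{a}$. The two terms just shown to be nonnegative can be dropped from the left-hand side, which leaves
\[ {\rm SC}(\hat{a}) \;\le\; \lambda\,{\rm SC}(\opt{a}) + \mu\,{\rm SC}(\hat{a}) \;=\; \lambda\,{\rm MinCost}(G) + \mu\,{\rm SC}(\hat{a}). \]
Because $\mu<1$, rearranging gives ${\rm SC}(\hat{a})\le \tfrac{\lambda}{1-\mu}\,{\rm MinCost}(G)$. Since $\hat{a}\in{\rm NE}(G)$, we have $\min_{a\in{\rm NE}(G)}{\rm SC}(a)\le{\rm SC}(\hat{a})$, hence $\min_{a\in{\rm NE}(G)}{\rm SC}(a)/{\rm MinCost}(G)\le\tfrac{\lambda}{1-\mu}$; taking the supremum over $G\in\mc{G}$ yields ${\rm PoS}(\mc{G})\le\tfrac{\lambda}{1-\mu}$.

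This argument has essentially no analytic content — it is the potential-function analogue of the classical smoothness argument — and its one real idea is the choice of certificate: the potential minimizer is precisely the assignment at which \emph{both} slack terms on the left of~\eqref{eq:smoothness_argument} point the right way, the aggregate deviation gain because a potential minimizer is automatically an equilibrium, and the $\zeta[\Phi(\opt{a})-\Phi(\hat{a})]$ term because it is a global minimizer of $\Phi$. This is exactly why the potential difference is built into the condition, in contrast with the price-of-anarchy smoothness argument, which must hold at \emph{every} equilibrium and therefore uses no potential. I anticipate no substantive obstacle; the only point requiring care rather than ingenuity is bookkeeping — $\Phi$ must be the Rosenthal potential of the congestion game actually induced by the mechanism $T$, so that its minimizer $\hat{a}$ is genuinely an equilibrium of that game, while ${\rm SC}$ continues to be measured with the original resource costs.
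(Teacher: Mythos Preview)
Your proof is correct and follows essentially the same approach as the paper: instantiate the smoothness inequality at an equilibrium with low potential and at the optimum, then use the Nash condition and the potential comparison to discard the two slack terms. The only cosmetic difference is that the paper selects any pure Nash equilibrium $\nash{a}$ with $\Phi(\nash{a})\le\Phi(\opt{a})$ rather than specifically the global potential minimizer; your choice is a particular instance of theirs, and the paper's slightly more general phrasing is what later lets them remark that the same bound covers the price of stochastic anarchy.
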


\begin{proof}
Consider any game $G\in\mc{G}$ and let $\opt{a} \in \mc{A}$ denote an optimal assignment, i.e., ${\rm SC}(\opt{a})={\rm MinCost}(G)$.  Thus, let $\nash{a}\in{\rm NE}(G)$ denote a pure Nash equilibrium that satisfies $\Phi(\nash{a})\leq \Phi(\opt{a})$.\footnote{Observe that such a pure Nash equilibrium must always exist since any potential minimizer is a pure Nash equilibrium.}  Since $C_i(\nash{a})\leq C_i(\opt{a}_i, \nash{a}_{-i})$ for all $i\in N$ and $\Phi(\nash{a})\leq \Phi(\opt{a})$, it follows from \eqref{eq:smoothness_argument} that 
\[ {\rm SC}(\nash{a}) \leq \lambda {\rm SC}(\opt{a}) + \mu {\rm SC}(\nash{a}). \]
Rearranging the above inequality gives us that ${\rm SC}(\nash{a})/{\rm SC}(\opt{a}) \leq \lambda/(1-\mu)$.  Since $\nash{a}$ is not necessarily the pure Nash equilibrium in ${\rm NE}(G)$ with minimum social cost, it holds that ${\rm PoS}(G) \leq \lambda/(1-\mu)$, and it could hold that ${\rm PoS}(G) < \lambda/(1-\mu)$ in general.
\end{proof}

The smoothness argument in Proposition \ref{prop:smoothness_argument} provides an upper bound on the Price of Stability by bounding the efficiency of all pure Nash equilibria with potential lower than the potential at the optimal assignment.  Our next result shows how one can leverage this smoothness argument to optimize an upper bound on the Price of Stability under a maximum allowable Price of Anarchy constraint.

\begin{theorem} \label{thm:upperbound}
Consider the family of resource cost functions $\mc{L}=\text{span}(b_1,\dots,b_m)$ corresponding to basis functions $b_1,\dots,b_m$, and maximum number of users $n$.  Further, consider a maximum allowable Price of Anarchy $\bar{\Pi}\geq {\rm MinPoA}(n,\mc{L})$.  Let $\{\opt{F}_1,\dots,\opt{F}_m\},\opt{\nu}, \opt{\rho}, \opt{\gamma}, \opt{\kappa}$ be solutions to the following:\footnote{In \eqref{bilp:upperbound_finite} and forthcoming optimization problems, we reformulate the decision variables and constraints to ensure they follow a (bi)linear program structure for the reader's convenience. For example, $\nu^{-1}\rho$ is a decision variable in \eqref{bilp:upperbound_finite}. Note that the corresponding optimal values $\{\opt{F}_1,\dots,\opt{F}_m\}$, $\opt{\nu}$, $\opt{\rho}$, etc. are uniquely determined by solutions to these problems.} 
\begin{equation} \label{bilp:upperbound_finite}
\begin{aligned} 
    \underset{\{F_j\},\nu^{-1}\rho,\gamma,\nu^{-1},\kappa}{\text{maximize}} \> & \gamma \quad \text{subject to:} \\
    & \nu^{-1}\rho \geq \bar{\Pi}^{-1} \nu^{-1}, \quad \nu^{-1} \geq 0, \quad \kappa \geq 0 \\
    & \nu^{-1}b_j(y)y-\nu^{-1}\rho b_j(x)x+(x-z)F_j(x)-(y-z)F_j(x+1) \geq 0, \\
    & \hspace*{200pt} \forall (x,y,z)\in\mc{I}(n), \forall j\in\{1,\dots,m\}, \\
    & b_j(y)y - \gamma b_j(x)x + (x-z)F_j(x) - (y-z)F_j(x+1) 
        + \kappa \left[ \sum^x_{k=1} F_j(k)-\sum^y_{k=1} F_j(k) \right] \geq 0, \\
    & \hspace*{200pt} \forall (x,y,z)\in\mc{I}(n), \forall j\in\{1,\dots,m\},
\end{aligned}
\end{equation}
where we define $\mc{I}(n)$ as in \eqref{eq:In}, and $b_j(0)=F_j(0)=F_j(n+1)=0$.  Then, the local taxation rule $\opt{T}$ defined as $\opt{T}(b_j)(x)=\opt{F}_j(x)-b_j(x)$, $j=1,\dots,m$, achieves Price of Anarchy ${\rm PoA}(\mc{G}^n_{\opt{T}})=1/\opt{\rho}\leq \bar{\Pi}$ and Price of Stability ${\rm PoS}(\mc{G}^n_{\opt{T}}) \leq 1/\opt{\gamma}$. 
\end{theorem}
The proof is presented in \ref{proof:upperbound}, and amounts to reformulating the problem of computing the local taxation rule that optimizes the smoothness bound in Proposition \ref{prop:smoothness_argument} as a tractable optimization problem.  The optimization problem in \eqref{bilp:upperbound_finite} is a bilinear program with a single bilinearity, since $\kappa$ is multiplied with $F$ in the final set of constraints.  Such programs can be solved efficiently using, e.g., the method of bisections, which involves solving a finite number of linear programs for appropriate guesses of the value $\opt{\kappa}$.  

A possible interpretation of the above result is that the local rule $\opt{T}$ guarantees that every game $G\in\mc{G}^n_{\opt{T}}$ has at least one pure Nash equilibrium with social cost at most $1/\opt{\gamma}$ times greater than ${\rm MinCost}(G)$.  Recall from the proof of Proposition \ref{prop:smoothness_argument} that this equilibrium may not represent the best performing equilibrium of $G$, so this represents an upper bound on the Price of Stability, in general.

\subsection{The lower bound} \label{sec:lower_bound}
The following theorem states our corresponding lower bound on the best achievable Price of Stability for a maximum allowable Price of Anarchy $\bar{\Pi}$:

\begin{theorem} \label{thm:lower_bound}
Consider the family of resource cost functions $\mc{L}=\text{span}(b_1,\dots,b_m)$ corresponding to basis functions $b_1,\dots,b_m$, and maximum number of users $n$.  Further, consider a maximum allowable Price of Anarchy $\bar{\Pi}\geq{\rm MinPoA}(n,\mc{L})$.  Let $\opt{F}_j,\nu_j,\rho_j$ be optimal values that solve the following $m$ linear programs (one for each $j$):
\begin{equation} \label{linprog:lower_bound}
\begin{aligned}
    \underset{F,\nu^{-1},\rho\nu^{-1}}{\text{maximize}} &\> \sum^n_{x=1} F(x) \quad \text{subject to:} \\
    &\> \rho\nu^{-1} \geq \bar{\Pi}^{-1} \nu^{-1}, \quad \nu^{-1}\geq 0, \quad F(1) = 1, \\
    &\> \nu^{-1}b_j(y)y-\rho\nu^{-1}b_j(x)x+(x-z)F(x)-(y-z)F(x+1) \geq 0,
                                \forall (x,y,z)\in\mc{I}(n),
\end{aligned}
\end{equation}
where we define $\mc{I}(n)$ as in \eqref{eq:In}, and $b_j(0)=F_j(0)=F_j(n+1)=0$.  Then, the Price of Stability of any local taxation rule $T$ with ${\rm PoA}(\mc{G}^n_T)\leq \bar{\Pi}$ must satisfy ${\rm PoS}(\mc{G}^n_T)\geq \max_j\{1/\opt{\gamma}_j\}$, where
\[ \opt{\gamma}_j = \min_{0\leq v<u\leq n} \frac{b(v)v+\sum^{u-v}_{k=1} F^{(u,v)}_j(k)}{b(u)u}, \]
where $F^{(u,v)}_j(k) = \max_{v+k\leq x \leq u} \opt{F}_j(x)$ for $k=1,\dots,u-v$.
\end{theorem}
The proof is presented in \ref{proof:lower_bound}.  We highlight some important observations regarding the above result in the discussion below:

Note that the linear program in \eqref{linprog:lower_bound} must be feasible for all values $\bar{\Pi}\geq{\rm MinPoA}(n,\mc{L})$ as there exists at least one set of feasible values $F,\nu,\rho$ by Proposition \ref{prop:optimal_poa} and Theorem \ref{thm:optimal_poa}.  Furthermore, the linear program must provide a (tight) lower bound of ${\rm PoS}(\mc{G}^n_T)\geq 1$ for any $\bar \Pi$ greater than the Price of Anarchy of the marginal cost rule, ${\rm PoA}(\mc{G}^n_{T^{\rm mc}})$, since the Price of Stability of the marginal cost rule is 1.  When the basis functions are convex and nondecreasing, the linear program must also provide a (tight) lower bound ${\rm PoS}(T)\geq \bar{\Pi}$ when $\bar{\Pi}={\rm MinPoA}(n,\mc{L})$, since we showed in Part~(ii) of the proof of Theorem~\ref{thm:optimal_poa} that a worst case game in this setting has the same structure as the construction we use to obtain this lower bound.\footnote{Though the game construction we consider to obtain the result in Theorem \ref{thm:lower_bound} is of the same structure as in Figure \ref{fig:singleton}, the selection of the resources' coefficients is more nuanced in general since we have no guarantee on the monotonicity of the resource cost function $F$ in this setting.  See \ref{proof:lower_bound} for more details.}  Additionally, the game construction from which we obtain this lower bound on the Price of Stability has a unique pure Nash equilibrium $\nash{a}$ where each user $i\in N$ strictly prefers to play $\nash{a}_i$ when users $1,\dots,i-1$ play their respective actions in $\nash{a}$.  It is straightforward to verify that $\nash{a}$ is also the unique coarse-correlated equilibrium of the game and, thus, our lower bound extends to the best case coarse correlated equilibrium efficiency.

\subsection{Equilibrium efficiency in congestion games without taxes} \label{sec:efficiency_notax}

Many works have focused on identifying tight bounds on the Price of Anarchy and Price of Stability, particularly for congestion games without taxes.  In this respect, \citet{aland2011exact} put forward an expression for the Price of Anarchy in polynomial congestion games without taxes. Meanwhile, \citet{christodoulou2015price} provide exact bounds on the Price of Stability in polynomial congestion games without taxes.  As both these values are known exactly, we can compare the equilibrium performance in the absence of taxes against upper bounds on the Pareto frontier for arbitrary number of users we derive using the technique presented in \ref{sec:arbitrary}.

In Columns~2 and~3 of Table~\ref{table:notax}, we summarize the Price of Anarchy and Price of Stability bounds from the literature on polynomial congestion games without taxes.  In Columns~4 and~5, we provide upper bounds on the best achievable Price of Stability while guaranteeing Price of Anarchy no greater than that of polynomial congestion games without taxes.  Conversely, in Columns~6 and~7, we provide upper bounds on the best achievable \emph{Price of Anarchy} without exceeding the \emph{Price of Stability} in polynomial congestion games without taxes.  From the values in Table~\ref{table:notax}, one can easily verify that using no tax in the polynomial congestion games considered is not Pareto optimal.  In fact, significant improvements can be achieved in terms of Price of Anarchy while guaranteeing the same Price of Stability.  For the specific case of polynomial congestion games of degree $d=4$, we observe that the Price of Anarchy can be reduced by more than 66.7\% with no increase in the Price of Stability.  Demonstrating the suboptimality of no tax more generally remains an interesting open problem and should be considered in future work.

\begin{table}[t]
    \centering
    \caption{\emph{Comparison of Price of Anarchy and Price of Stability values for no taxes and Pareto (sub-)optimal taxation rules in congestion games with polynomial resource costs of degree $d=1,\dots,4$.} The Price of Anarchy and Price of Stability values for no taxes are summarized in Columns~2 and~3 and were derived by \citet{aland2011exact} and \citet{christodoulou2015price}, respectively.  Columns~4 and 5 (resp. 6 and 7) report values that must lie above the Pareto frontier between the Price of Anarchy and Price of Stability metrics and correspond with the Price of Anarchy (resp. Price of Stability) under no taxes.  These were computed using the technique presented in 
    \ref{sec:arbitrary}.}
    \label{table:notax}
    \begin{tabular}{|c||r|r|r|r|r|r|}
        \hline
        \multirow{2}{*}{$d$} &
            \multicolumn{2}{c|}{No taxes} &
            \multicolumn{2}{c|}{${\rm PoA}$ Requirement} &
            \multicolumn{2}{c|}{${\rm PoS}$ Requirement} \\
        & \multicolumn{1}{c|}{${\rm PoA}$} & \multicolumn{1}{c|}{${\rm PoS}$} 
        & \multicolumn{1}{c|}{${\rm PoA}$} & \multicolumn{1}{c|}{${\rm PoS}$} 
        & \multicolumn{1}{c|}{${\rm PoA}$} & \multicolumn{1}{c|}{${\rm PoS}$} \\
        \hline
        1 &     2.500 & 1.577 &   2.500 & 1.418 &  2.381 & 1.577 \\ \hline 
        2 &     9.583 & 2.361 &   9.583 & 1.156 &  7.044 & 2.361 \\ \hline
        3 &    41.536 & 3.322 &  41.536 & 1.290 & 22.930 & 3.322 \\ \hline
        4 &   267.643 & 4.398 & 267.643 & 1.135 & 88.895 & 4.398 \\ \hline
    \end{tabular}
\end{table}

\section{Attainable joint performance guarantees} \label{sec:gamebygame}

In the previous sections, we study the Price of Anarchy and Price of Stability of a given family of instances $\mc{G}_T$ as independent, worst-case measures of the equilibrium efficiency, i.e., we summarize the equilibrium efficiency of all game instances under a given taxation rule $T$ with only two numbers, ${\rm PoA}(\mc{G}_T)$ and ${\rm PoS}(\mc{G}_T)$.  Note, however, that the values ${\rm PoA}(\mc{G}_T), {\rm PoS}(\mc{G}_T)$ may not be achieved within the same game instance. Specifically, there need not exist a game instance $G\in\mc{G}_T$ such that ${\rm PoA}(G)={\rm PoA}(\mc{G}_T)$ \emph{and} ${\rm PoS}(G)={\rm PoS}(\mc{G}_T)$.  Rather, it could be that there exist two distinct games $G,G'\in\mc{G}_T$ satisfying ${\rm PoA}(G) = {\rm PoA}(\mc{G}_T)>{\rm PoA}(G')$ and ${\rm PoS}(G')={\rm PoS}(\mc{G}_T)>{\rm PoS}(G)$.  This motivates our investigation -- in this section -- of those Price of Anarchy and Price of Stability pairs that can be achieved within the same game instance, where we wish to understand if considering such \emph{attainable joint performance measure} offers more refined insights on the joint optimization of the worst and best equilibrium efficiency.

More specifically, for a given family $\mc{G}$, we aim to capture the dependence of the Price of Stability of an invidual instance on its Price of Anarchy.  To that end, for given $\tau\in[1,{\rm PoA}(\mc{G})]$, we define $\mc{G}^\tau$ as
\begin{equation} \label{eq:tau_game}
    \mc{G}^\tau := \{ G\in\mc{G}\text{ s.t. } {\rm PoA}(G)=\tau \}.
\end{equation}
Our goal is to characterize how the value ${\rm PoS}(\mc{G}^\tau)$ evolves with $\tau\in[1,{\rm PoA}(\mc{G})]$.

Our next result establishes that the tension between the Price of Anarchy and Price of Stability persists under the attainable joint performance measure.  This is based on the observation that the independently measured Price of Anarchy and Price of Stability corresponding with taxes that minimize either the Price of Anarchy or the Price of Stability are in fact attained within the same game instance. Furthermore, these instances are simple, single-selection games with unique pure Nash equilibria as described in the proof of Theorem \ref{thm:optimal_poa}ii) (in \ref{proof:optimal_poa}). We formally state these observations in the following corollary:

\begin{corollary} \label{cor:instance-by-instance}
For any family of nondecreasing, convex latency functions $\mc{L}$, and maximum number of users $n$, the following statements hold:
\begin{itemize}
    \item Let $T^{\rm PoA}$ denote a taxation rule that minimizes the Price of Anarchy of the corresponding family of instances, i.e.,
    \begin{equation}
        T^{\rm PoA} \in \underset{T}{\arg\,\min} \> {\rm PoA}(\mc{G}^n_T).
    \end{equation}
    There exists an instance $G\in\mc{G}^n_{T^{\rm PoA}}$ such that ${\rm PoA}(G)={\rm PoS}(G)={\rm PoA}(\mc{G}^n_{T^{\rm PoA}})$.
    \item Let $T^{\rm PoS}$ denote a taxation rule that minimizes the Price of Stability of the corresponding family of instances, i.e.,
    \begin{equation}
        T^{\rm PoS} \in \underset{T}{\arg\,\min} \> {\rm PoS}(\mc{G}^n_T).
    \end{equation}
    There exists an instance $G\in\mc{G}^n_{T^{\rm PoS}}$ such that ${\rm PoA}(G)={\rm PoA}(\mc{G}^n_{T^{\rm PoS}})$ and ${\rm PoS}(G)={\rm PoS}(\mc{G}^n_{T^{\rm PoS}})=1$.
\end{itemize}
\end{corollary}

Corollary \ref{cor:instance-by-instance} establishes that the extreme points of the Price of Anarchy, Price of Stability trade-off curve coincide whether we consider the independent, worst-case performance measure, or the attainable joint performance measure. It remains to be seen whether these two coincide in general, i.e., that the independent, case performance guarantee is always attainable by the same game instance for any family of instances under any taxation rule. To that end, we put forward a modified smoothness condition and game construction to characterize the relation between the Price of Anarchy and Price of Stability under the attainable joint performance measure. We show that there can be a significant separation between the independent, worst-case performance guarantee and the attainable joint performance guarantees.

\subsection{An upper bound}
We obtain an upper bound on the attainable joint performance guarantees using the following smoothness condition, which applies to any family of potential games, where each game has corresponding potential function $\Phi:\mc{A}\to\bb{R}$: 

\begin{proposition}
Given any family of congestion games $\mc{G}$ and parameter $\tau\in[1,{\rm PoA}(\mc{G})]$, suppose that there exist parameters $\kappa,\lambda_1,\lambda_2 \geq 0$, and $\mu,\nu\in\bb{R}$ such that, for every game $G\in\mc{G}^\tau$ and actions $a,a',a'' \in \mc{A}$, it holds that
\begin{equation} \label{eq:gamebygame_smoothness}
\begin{aligned}
    & \sum^n_{i=1} \Bigg[ \lambda_1[C_i(a'_i, a_{-i})-C_i(a)] 
                       +\lambda_2[C_i(a''_i, a_{-i})-C_i(a)] \Bigg]
                       +\kappa[\Phi(a'') - \Phi(a')] \\
    \leq \> & \nu \cdot {\rm SC}(a) - {\rm SC}(a') + \mu \cdot {\rm SC}(a'').
\end{aligned}
\end{equation}
Then, the Price of Stability satisfies ${\rm PoS}(\mc{G}^\tau) \leq \mu+\tau\nu$.
\end{proposition}

\begin{proof}
Consider any game $G\in\mc{G}^\tau$ and let $\opt{a} \in \mc{A}$ denote an optimal assignment, i.e., ${\rm SC}(\opt{a})={\rm MinCost}(G)$.  Let each $a^{\text{ne},1},a^{\text{ne},2}\in{\rm NE}(G)$ denote a pure Nash equilibrium of $G$, not necessarily distinct.  We let $a^{\text{ne},2}$ be a pure Nash equilibrium that satisfies $\Phi(a^{\text{ne},2})\leq \Phi(\opt{a})$.  Since $C_i(a^{\text{ne},1})\leq C_i(\opt{a}_i, a^{\text{ne},1}_{-i})$ and $C_i(a^{\text{ne},1})\leq C_i(a^{\text{ne},2}_i, a^{\text{ne},1}_{-i})$ for all $i\in N$, and $\Phi(a^{\text{ne},2})\leq \Phi(\opt{a})$, it follows from \eqref{eq:gamebygame_smoothness} that 
\[ {\rm SC}(a^{\text{ne},2}) \leq \nu \cdot {\rm SC}(a^{\text{ne},1}) + \mu \cdot {\rm SC}(\opt{a}). \]
Dividing both sides of the above inequality by ${\rm SC}(\opt{a})$, we obtain
\[ \frac{{\rm SC}(a^{\text{ne},2})}{{\rm SC}(\opt{a})} \leq \nu \cdot \frac{{\rm SC}(a^{\text{ne},1})}{{\rm SC}(\opt{a})} + \mu \leq \mu + \tau\nu, \]
where the final inequality holds since ${\rm SC}(a^{\text{ne},1})/{\rm SC}(\opt{a}) \leq {\rm PoA}(G) = \tau$ by the definition of $\mc{G}^\tau$ from \eqref{eq:tau_game}. Following the same reasoning as in the proof of Proposition \ref{prop:smoothness_argument}, it follows that ${\rm PoS}(G) \leq \mu+\tau\nu$.
\end{proof}
Observe that by using this smoothness condition, an upper bound on the attainable joint performance guarantees can be obtained for the family of instances corresponding to any class of latency functions and any taxation rule. In our next result, we use this smoothness argument to derive an upper bound on the attainable joint performance guarantees in affine and quadratic congestion games without taxes:

\begin{corollary} \label{cor:gamebygame_ub}
Consider the family of affine resource cost functions, i.e., $\mc{L}=\text{span}(b_1,b_2)$ where $b_1(x)=1$ and $b_2(x)=x$, and let $\mc{G}_0$ represent the family of affine congestion games without taxes. It holds that
\begin{equation} \label{eq:gamebygame_upperbound_affine}
    {\rm PoS}(\mc{G}^\tau_0) \leq \min\left\{ \tau, {\rm PoS}(\mc{G}_0), \frac{-4}{3}\tau+\frac{13}{3} \right\},
\end{equation}
for all $\tau\in[1,5/2]$, where ${\rm PoS}(\mc{G}_0)= = 1+\sqrt{3}/3 \approx 1.577$. Next, consider the family of quadratic resource cost functions, i.e., $\mc{L}=\text{span}(b_1,b_2,b_3)$ where $b_1(x)=1$, $b_2(x)=x$ and $b_3(x)=x^2$, and let $\mc{G}_0$ represent the family of quadratic congestion games without taxes. It holds that
\begin{equation} \label{eq:gamebygame_upperbound_quadratic}
    {\rm PoS}(\mc{G}^\tau_0) \leq \min\left\{ \tau, {\rm PoS}(\mc{G}_0), \frac{-1}{3}\tau+\frac{151}{36} \right\},
\end{equation}
for all $\tau\in[1,115/12]$, where ${\rm PoS}(\mc{G}_0)\approx 2.361$.
\end{corollary}

\begin{proof}
The proof follows from the smoothness condition in \eqref{eq:gamebygame_smoothness} by showing that the smoothness parameters $\lambda_1=\lambda_2=1$, $\kappa=3$, $\mu=13/3$ and $\nu=-4/3$ are feasible for all affine congestion games without taxes, and that the smoothness parameters $\lambda_1=1/8$, $\lambda_2=29/72$, $\kappa=35/9$, $\mu=151/36$ and $\nu=-1/3$ are feasible for all quadratic congestion games without taxes.
\end{proof}

In Figure \ref{fig:gamebygame}, we plot the upper bound on the attainable joint performance guarantees for affine congestion games without taxes provided in \eqref{eq:gamebygame_upperbound_affine} (solid black line). Observe that the upper bound demonstrates that the independent, worst-case performance guarantee, $({\rm PoA}(\mc{G}_T), {\rm PoS}(\mc{G}_T))$ -- which is $(2.500,1.577)$ for the family of affine congestion games without taxes -- cannot be achieved by any instance.  Additionally, and perhaps surprisingly, the upper bound guarantees that any affine congestion game without taxes with worst case Price of Anarchy has Price of Stability equal to 1.  Note that these two observations do not necessarily hold for every rule $T$, as we show in Corollary \ref{cor:instance-by-instance}.

\subsection{A lower bound} \label{sec:gamebygame_lowerbound}
Next, we wish to characterize a lower bound that complements the upper bound that we obtained using the smoothness condition in \eqref{eq:gamebygame_smoothness}. We provide such a lower bound by means of a game construction:  Let $a^{\rm w-ne}, a^{\rm b-ne}, a^{\rm opt}\in\mc{A}$ respectively denote the worst-case pure Nash equilibrium, best-case pure Nash equilibrium, and optimal joint allocation of a game. Observe that, for $a^{\rm w-ne}$ to be a pure Nash equilibrium, the following constraints must hold:
$$ C_i(a^{\rm w-ne}) \leq  C_i(a_i,a^{\rm w-ne}_{-i}), \forall a_i \in \mc{A}_i, \forall i \in N$$
Furthermore, to ensure that $a^{\rm b-ne}$ is a pure Nash equilibrium, it is sufficient to impose the constraints:
$$ C_i(a^{\rm b-ne}_{1:i}, a_{i+1:n}) < C_i(a^{\rm b-ne}_{1:i-i},a_{i:n}), \forall a\neq a^{\rm w-ne}\in\mc{A}, i \in N,$$
Note that $a^{\rm w-ne}$ and $a^{\rm b-ne}$ are the \emph{only} pure Nash equilibria of the game under the imposed user cost structure.

The game construction belongs to a subset of the family of instances $\mc{G}_T$ that only contains instances with at most two pure Nash equilibria ($a^{\rm w-ne}$ and $a^{\rm b-ne}$), of which $a^{\rm b-ne}$ is the game's potential minimizer.  Observe that by maximizing ${\rm SC}(a^{\rm b-ne})$ while requiring that ${\rm SC}(a^{\rm w-ne})=\tau \cdot {\rm SC}(\opt{a})$, we can obtain a lower bound on ${\rm PoS}(\mc{G}^\tau_T)$.  Since the constraints we consider impose a particular user cost structure on the games we consider, this lower bound may not necessarily be a tight characterization.  Nonetheless, the advantage of this lower bound is that -- under an appropriate parameterization -- it can be computed via linear programming methods for a given maximum number of users $n$.  We provide the details on such a parameterization and corresponding linear program in Appendix \ref{appendix:gamebygame_lowerbound} for ease of presentation.  In Figure \ref{fig:gamebygame}, we plot the lower bound on ${\rm PoS}(\mc{G}^\tau_T)$ for affine congestion games without taxes computed for a maximum of $n=4$ users (solid orange line). 

\begin{figure}[b!]
    \centering
    \caption{\emph{The attainable joint performance guarantees in affine congestion games without taxes.} We plot our upper bound (solid, black line) and lower bound (solid, orange line) on the set of feasible (PoA, PoS) pairs in the family of affine congestion games without taxes. We also plot the (PoA, PoS) of $10^5$ randomly generated instances from this family (navy ‘+’ marks), which all fall within the bounds (details on how these instances were generated are provided in the main text). Observe that the upper bound rules out any instances with joint performance equal to, or close to, the independent, worst-case performance guarantee (2.500, 1.577) (red star). Furthermore, our upper and lower bounds coincide at the point (2.50, 1.00), which implies that any worst-case affine congestion game without taxes $G$ from a PoA perspective must satisfy ${\rm PoS}(G)=1$. Finally, although examples of worst-case instances do not arise in the randomly generated instances, their distribution mimicks the shape of our bounds, i.e., high PoA corresponds with low PoS, and vice versa.}
    \includegraphics[width=0.75\textwidth,trim={0 3.5cm 5cm 0}]{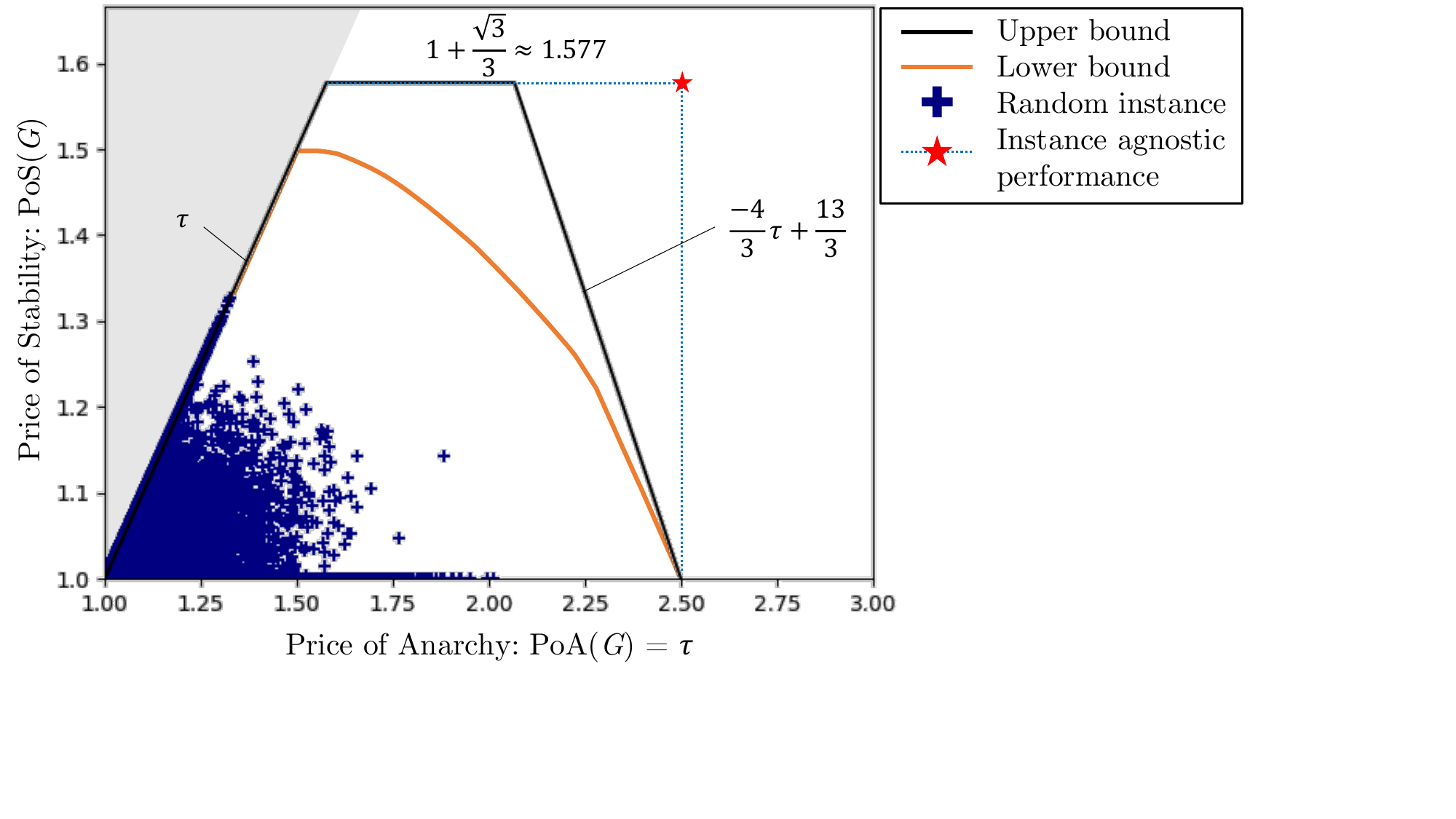}
    \label{fig:gamebygame}
\end{figure}

\subsection{Simulation results}

We provide a simulation example to compare the independent, worst-case performance guarantee and the attainable joint performance guarantees for the Price of Anarchy and Price of Stability. In our simulation example, we consider the family of affine congestion games without taxes, for which the independent, worst-case guarantee is $(2.500,1.577)$. 

Consider an affine congestion game with $n=4$ users and $|\mc{E}|=10$ resources. To each of the edges $e\in\mc{E}$, assign the resource cost function $\ell_e(x) = \alpha_e \cdot x$, where $\alpha_e$ is sampled independently from the uniform distribution between 0 and 1. Further assign to each user $i\in N$ three actions, each action consisting of the unique resources among two resources drawn (with replacement) uniformly from the set of resources.

We generate $10^5$ such random instances of affine congestion games without taxes. For each of these instances, we compute the system cost at all the pure Nash equilibria, as well as the minimum achievable system cost. From these values, we obtain the Price of Anarchy and Price of Stability of each instance. In Figure \ref{fig:gamebygame}, we plot the Price of Anarchy, Price of Stability pair for each of the $10^5$ random instances as navy blue `+' marks. Observe that the joint performance of each of the generated instances falls within our bounds on the attainable joint performance guarantees. Furthermore, though the attainable joint performance guarantees of the instances are well below the theoretical worst-case, the distribution of the instances mimicks the shape of our bounds, i.e., games with high Price of Anarchy have low Price of Stability, and vice versa.

\section{Conclusions and Future Directions}

In this paper, we investigated the consequences of minimizing the Price of Anarchy on the Price of Stability in congestion games. Our first set of results showed that the local taxation rule that minimizes the Price of Anarchy has Price of Stability equal to the Price of Anarchy. As the marginal cost rule always achieves a Price of Stability of 1, it followed that a trade-off exists between the Price of Anarchy and the Price of Stability, as the best achievable Price of Anarchy is generally strictly greater than 1. We then developed techniques for deriving upper and lower bounds on the Pareto frontier between the Price of Anarchy and Price of Stability.  Finally, we demonstrated that a similar trade-off persists when we study the attainable joint performance guarantees.  All of our results extend to the efficiency of coarse-correlated equilibria and to the Pareto frontier between the Price of Anarchy and the Price of Stochastic Anarchy.

A parallel research effort has investigated the design of global taxation rules where the design of tax functions is conditioned on all parameters of a game instance.  Interestingly, recent results have suggested that transitioning from local to global taxation rules may not provide significant reductions of the achievable Price of Anarchy \cite{paccagnan2019incentivizing}.  Whether a similar trade-off between the Price of Anarchy and the Price of Stability exists in settings with global rules is an open and interesting problem.

The Price of Anarchy and Price of Stability are only two of the many metrics used to evaluate the performance of distributed algorithms. In this respect, our contributions represent preliminary results toward the broader research agenda of identifying trade-offs in the design of taxes.  Future work should focus on understanding the impact of minimizing the Price of Anarchy on the algorithmic performance with respect to other metrics, including the rate of convergence to an equilibrium \cite{fanelli2012speed} and the transient system performance \cite{bilo2011performance,konda2021balancing}.

\citet{filos2019pareto} and \citet{ramaswamy2017impact} investigate trade-offs between the Price of Anarchy and Price of Stability in distinct classes of problems.  However, they all report findings analagous to our first main result: when the Price of Anarchy is optimized, the Price of Anarchy and Price of Stability are equal.  While it is obvious that the Price of Stability can never exceed the Price of Anarchy, it is unclear whether these two metrics must always be in tension with one another.  A relevant research direction is to understand the broader class of problems for which the Price of Anarchy can only be optimized to the detriment of the Price of Stability.

\bibliography{references}

\newpage
\begin{appendix}

\section{Proof of Theorem~\ref{thm:optimal_poa}} \label{proof:optimal_poa}

We prove Statements i) and ii) of the claim separately, below:

\vspace{.2cm}\noindent\emph{Proof of Statement i).} Given any rule $T$ (not necessarily linear), we show that there exists some linear rule $T^{\rm lin}$ satisfying ${\rm PoA}(\mc{G}^n_{T^{\rm lin}}) \leq {\rm PoA}(\mc{G}^n_T)$. The linear rule we consider is generated from the taxes $T(b_j)$, $j=1,\dots,m$, as follows: $T^{\rm lin}(\sum^m_{j=1} \alpha_j b_j) = \sum^m_{j=1} \alpha_j T(b_j)$. As the same set of arguments also hold for the Price of Stability, the statement follows.

Let $\bar{\mc{G}}^n_T$ and $\bar{\mc{G}}^n_{T^{\rm lin}}$ be the restricted families of congestion games with a maximum of $n$ users in which every resource $e$ has resource cost $\ell_e\in\{b_1,\dots,b_m\}$. Within this restricted class of games, the Price of Anarchy of the local rule $T$ must be equal to that of $T^{\rm lin}$ since the resulting taxes are equivalent.  For linear rules such as $T^{\rm lin}$, one can show that for any congestion game $G\in\mc{G}^n_{T^{\rm lin}}$ there is another game $G'\in\bar{\mc{G}}^n_{T^{\rm lin}}$ (possibly with many more resources) that has arbitrarily close Price of Anarchy following the proof of Theorem 5.6 in \citet{roughgarden2015intrinsic}.  In other words, the Price of Anarchy of $\mc{G}^n_{T^{\rm lin}}$ is equal to the Price of Anarchy of $\bar{\mc{G}}^n_{T^{\rm lin}}$.  Meanwhile, for general local rules such as $T$, we observe that the Price of Anarchy achieved within the restricted class of games $\bar{\mc{G}}^n_T$ can only be less than or equal to the Price of Anarchy achieved within $\mc{G}^n_T$. It immediately follows that ${\rm PoA}(\mc{G}^n_{T^{\rm lin}}) \leq {\rm PoA}(\mc{G}^n_T)$.

\vspace{.2cm}\noindent\emph{Proof of Statement ii).} Consider the following $m$ linear programs:
\begin{equation} \label{linprog:relaxedlp}
\begin{aligned}
    \underset{F,\rho}{\text{maximize}} \quad &\> \rho \\
    \text{subject to:} \quad &\> b_j(y)y-\rho b(x)x+\min\{x,n-y\}F(x)-\min\{y,n-x\}F(x+1)\geq 0 \\
                             &\> \hspace*{150pt} \forall (x,y)\in\{0,\dots,n\}\times\{1,\dots,n\} \cup (n,0).
\end{aligned}
\end{equation}
Observe that the above linear program is a relaxation of the linear program in \eqref{linprog:optimalpoa} where we only consider the constraints $(x,y,z) \in \mc{I}(n)$ such that $(x,y) \in \{0,\dots,n\}\times\{1,\dots,n\} \cup (n,0)$ and $z=\max\{0,x+y-n\}$.  Reference~\cite{paccagnan2019incentivizing} provides an expression for a set of optimal solutions $(\opt{F}_j, \opt{\rho}_j)$, $j=1,\dots,m$, to the $m$ linear programs above and show that these are also optimal solutions of the $m$ linear programs in \eqref{linprog:optimalpoa}.  As part of their proof, they show that the functions $\opt{F}_j$ must be nondecreasing.

The rest of the proof is shown in two steps: a)~we show that for the solutions $(\opt{F}_j, \opt{\rho}_j)$, $j=1,\dots,m$, to the $m$ linear programs in \eqref{linprog:optimalpoa}, the functions $\opt{F}_1, \dots, \opt{F}_m$ are unique (up to rescaling); b)~leveraging the fact that the functions $\opt{F}_1, \dots, \opt{F}_m$ are nondecreasing, we construct a congestion game $G$ that has ${\rm PoS}(G)=\max_j\{ 1/\opt{\rho}_j \}$.

\noindent\emph{Part iia) -- Proof that $\opt{F}_j$ is the unique optimal solution.} We must show that there is no other function $F$ that yields a value of $\rho=\opt{\rho}$. By contradiction, let us assume that there exists a function $\hat F$ different from $\opt{F}_j$
that also achieves $\opt{\rho}$. Let $k+1$ be the first index at which $\hat F(k+1) \neq \opt{F}_j(k+1)$. If $k=0$, due to the constraint corresponding to $(x=0,y=1)$ in the linear program in \eqref{linprog:relaxedlp}, it holds that $\hat F(1) \leq b(1) = \opt{F}_j(1)$. Since $\hat F(1) \neq \opt{F}_j(1)$, it must hold that $\hat F(1)<\opt{F}_j(1)$. A similar argument holds for $k>0$, since
\[ \hat F(k+1) \leq \max_{y \in \{1,\dots,n\}} \frac{b(y)y-\opt{\rho}b(k)k+\min\{k,n-y\}\hat F(k)}{\min\{y,n-k\}} = \opt{F}_j(k+1), \]
where the equality holds since $\hat F(k)=\opt{F}_j(k)$, by assumption.  In short, at the first $k+1$ where $\hat F$ does not equal $\opt{F}_j$, the former is always strictly lower than the latter or else a constraint in the linear program would be violated.  The contradiction follows from the constraints with $(x=n,y=y^*_n)$:
\[ \opt{\rho} \leq \frac{b(y^*_n)y^*_n+(n-y^*_n)\hat F(n)}{b(n)n} < \frac{b(y^*_n)y^*_n+(n-y^*_n)\opt{F}_j(n)}{b(n)n} = \opt{\rho}_j, \]
where the strict inequality holds since $n-y^*_n>0$.  Observe that if $n-y^*_n=0$, it holds that $\opt{\rho}_j=1$, which violates the stated conditions for uniqueness in the theorem statement.

\noindent\emph{Part iib) -- Game construction.} Here we show that for each function in $\{\opt{F}_j\}$, we can construct a congestion game that has Price of Stability equal to $1/\opt{\rho}_j$.  Without loss of generality, we assume that that all basis functions $b_1,\dots,b_m$ are scaled such that $F_j(1)=1$.  Consider the active constraint correponding to $x=n$ for each $\opt{F}_j$, which -- after some rearrangement -- appears as follows: 
\begin{equation}
    \frac{1}{\opt{\rho}_j} = \max_{y\in\{0,1,\dots,n\}} \frac{b_j(n)n}{(n-y)\opt{F}_j(n)+b_j(y)y}.
\end{equation}
Define $y^*_n \in \{0,1,\dots,n\}$ as an argument that maximizes the right-hand side in the above expression.  Consider a congestion game $G$ with a set of $n$ users $N=\{1,\dots,n\}$ and $n-y^*_n+1$ resources $\mc{E}=\{e_0,e_1,\dots,e_{n-y^*_n}\}$.  The users' action sets are defined as follows: Each user $i\in\{1,\dots,n-y^*_n\}$ has action set $\mc{A}_i=\{\nash{a}_i, \opt{a}_i\}$ where $\nash{a}_i=\{e_0\}$ and $\opt{a}_i=\{e_i\}$; and, each user $i\in\{n-y^*_n+1,\dots,n\}$ has $\mc{A}_i=\{a_i=\{e_0\}\}$.  The cost on resource $e_0$ is $b_j$, whereas each $e \in \{e_1,\dots,e_{n-y^*_n}\}$ has cost $[\opt{F}_j(n)+\epsilon] \cdot b_j$ for some $\epsilon>0$.  Since $\opt{F}_j(1)=1$ and $\opt{F}_j$ is nondecreasing, it is straightforward to verify that the assignment $(\nash{a}_1,\dots,\nash{a}_{n-y^*_n},a_{n-y^*_n+1},\dots,a_n)$ is the unique pure Nash equilibrium of the game.  Simply observe that for any assignment $a \in \Pi_i \mc{A}_i$, any user $i\in\{1,\dots,n-y^*_n\}$ selecting its action $\opt{a}_i$ can decrease its cost by selecting its action $\nash{a}_i$ instead, since $\opt{F}_j(|a|_{e_0}) < \opt{F}_j(n)+\epsilon$.  Thus, the constructed game has a unique pure Nash equilibrium, with system cost $b_j(n)n$. The assignment $(\opt{a}_1,\dots,\opt{a}_{n-y^*_n},a_{n-y^*_n+1},\dots,a_n)$ has system cost $(n-y^*_n)[\opt{F}_j(n)+\epsilon]+b_j(y^*_n)y^*_n$.  Thus, taking the limit as $\epsilon \to 0^+$, the Price of Stability of the constructed game satisfies ${\rm PoS}(G)\geq b_j(n)n / [(n-y^*_n)F(n)+b_j(y^*_n)y^*_n] = 1/\opt{\rho}_j$.  Since the function $\opt{F}_j$ has corresponding Price of Anarchy guarantee of $1/\opt{\rho}_j$, the Price of Stability must also be upper-bounded by $1/\opt{\rho}_j$.  Thus, ${\rm PoS}(\mc{G}^n_T)= \max_j \{1/\opt{\rho}_j\} = {\rm PoA}(\mc{G}^n_T)$, concluding the proof.

\section{Proof of Theorem~\ref{thm:upperbound}} \label{proof:upperbound}
Observe that the optimal upper bound achievable from the smoothness argument 
in Proposition~\ref{prop:smoothness_argument} can be computed as the solution
to the following fractional program:
\begin{align*}
    \underset{\zeta > 0, \lambda > 0, \mu}{\inf} \left\{ \frac{\lambda}{1-\mu} 
    \text{ s.t. } (\zeta,\lambda,\mu) \text{ satisfy \eqref{eq:smoothness_argument} } \forall a,a' \in \mc{A}, \forall G\in\mc{G}^n_T \right\}.
\end{align*}

To reduce the number of constraints, we introduce the following parameterization of any pair 
of assignments $a,a'\in\mc{A}$ in a game $G \in \mc{G}^n_T$:
Consider each resource $e \in \mc{E}$ and recall that 
$\ell_e(x) = \sum^m_{j=1} \alpha_{e,j} \cdot b_j(x)$ with $\alpha_{e,j}\geq 0$ for all $j$.
Let $x_e=|a|_e$, $y_e=|a'|_e$ and $z_e=|\{i\in N:e \in a_i\}\cap\{i\in N:e \in a'_i\}|$.
It follows that $(x_e,y_e,z_e)$ belongs to the set $\mc{I}(n)$ of all triplets 
$(x,y,z) \in \bb{N}^3$ that satisfy $1\leq x+y-z \leq n$ and $z \leq \min\{x,y\}$.
We define parameters $\theta(x,y,z,j) = \sum_{e\in\mc{E}_{x,y,z}} \alpha_{e,j}$ where 
$\mc{E}_{x,y,z}=\{e\in\mc{E} \text{ s.t. } (x_e,y_e,z_e) = (x,y,z) \}$.
Under this parameterization, observe that the inequality in 
\eqref{eq:smoothness_argument} can be rewritten as
\begin{equation*}
\begin{aligned}
    & \sum^m_{j=1} \sum_{x,y,z} \left[ b_j(x)x + (y-z)F_j(x+1) - (x-z)F_j(x) + \zeta\left[ \sum^x_{k=1} F_j(j) - \sum^y_{k=1} F_j(k) \right] \right] \theta(x,y,z,j) \\
    \leq \> & \sum^m_{j=1} \sum_{x,y,z} \left[ \lambda b_j(y)y + \mu b_j(x)x \right] \theta(x,y,z,j).
\end{aligned}
\end{equation*}
We note that any $(\zeta, \lambda, \mu)$ that satisfies the above constraint for 
each individual summand corresponding with the triplets $(x,y,z)\in\mc{I}(n)$ and $j=1,\dots,n$ 
must satisfy the smoothness definition in Propositon~\ref{prop:smoothness_argument}
since we have shown that the inequalities governing the smoothness definition are a linear combination 
over the $|\mc{I}(n)|\times m$ summands with nonnegative coefficients $\theta(x,y,z,j)$.
Based on this observation, we obtain the following linear program for computing an 
upper bound on the Price of Stability after the change of variables 
$\gamma = (1-\mu)/\lambda$, $\nu = 1/\lambda$ and $\kappa = \zeta/\lambda$:
\begin{align*}
    & \underset{\gamma,\nu\geq 0,\kappa\geq 0}{\text{maximize}} \quad \gamma\quad \text{subject to:} \\
    & b_j(y)y-\gamma b_j(x)x+\nu [(x-z)F_j(x)-(y-z)F_j(x+1)]
                                + \kappa \left[ \sum^x_{k=1} F_j(k)-\sum^y_{k=1}F_j(k) \right] \geq 0,\\
    & \hspace*{200pt} \forall (x,y,z)\in\mc{I}(n), \forall j\in\{1,\dots,m\}.
\end{align*}
Under this change of variables, it holds that ${\rm PoS}(\mc{G}^n_T) \leq 1/\opt{\gamma}$ for optimal solutions $(\opt{\gamma},\opt{\nu},\opt{\kappa})$.  We note that the `$\inf$' objective can now be written as a `maximize' since $\gamma \in [0,1]$ must hold.

We are interested in obtaining an upper bound on the best Price of Stability
that can be achieved by introducing taxation rules.
By including the functions $F_j$, $j=1,\dots,m$, as decision variables in the 
dual program, we obtain the following bilinear program for computing a local 
taxation rule that minimizes the upper bound on the Price of Stability:
\begin{align*}
    & \underset{\{F_j\},\gamma,\kappa\geq 0}{\text{maximize}} \quad \gamma \quad \text{subject to:} \\
    & b_j(y)y-\gamma b_j(x)x+(x-z)F_j(x)-(y-z)F_j(x+1)
                                + \kappa \left[ \sum^x_{k=1} F_j(k)-\sum^y_{k=1}F_j(k) \right] \geq 0, \\
    & \hspace*{200pt} \forall (x,y,z)\in\mc{I}(n), \forall j\in\{1,\dots,m\}.
\end{align*}
Then, for optimal solution $(\{\opt{F}_j\},\opt{\gamma},\opt{\kappa})$,
the taxation rule $\opt{T}$ defined as $\opt{T}(b_j)(x)=\opt{F}_j(x)-b_j(x)$ 
for all $j$ and $x$ satisfies ${\rm PoS}(\opt{T})\leq 1/\opt{\gamma}$.
In the above bilinear program, we have imposed $\nu=1$, which removes one 
bilinearity in the constraints.
The only remaining bilinearity involves the decision variable $\kappa$ and the 
functions $F_1, \dots, F_m$.

To obtain the local taxation rule $\opt{T}$ that guarantees a particular Price of Anarchy $\bar \Pi$ while minimizing the upper bound on the Price of Stability, we add the constraints for the Price of Anarchy from \eqref{linprog:optimalpoa} to the bilinear program. We require that $\bar \Pi$ be greater than or equal to ${\rm MinPoA}(n,\mc{L})$ for feasibility. We can then simultaneously minimize the upper bound on the Price of Stability while guaranteeing the desired Price of Anarchy.  After some rearrangement of decision variables, we obtain the bilinear program in the claim.

\section{Proof of Theorem~\ref{thm:lower_bound}} \label{proof:lower_bound}
Consider the set of games $\mc{G}^n_T$ with at most $n$ users, family of latency functions $\mc{L}=\text{span}(b_1,\dots,b_m)$ under basis functions $b_1,\dots,b_m$ and local taxation rule $T$. Define $F_j(x)=b_j(x)+T(b_j)(x)$ for $x=1,\dots,n$, $j=1,\dots,m$.  Without loss of generality, we normalize such that $F_j(1)=1$ for $j=1,\dots,m$. Define a game $G\in\mc{G}$ with $u$ users and $u-v+1$ resources for $v$ such that $0 \leq v < u \leq n$.  We denote the user set as $N=\{1,\dots,u\}$ and the resource set as $\mc{E}=\{e_0, e_1, \dots, e_{u-v}\}$. The users' action sets are defined as follows: Each user $i\in\{1,\dots,u-v\}$, has action set $\mc{A}_i = \{\nash{a}_i, \opt{a}_i\}$ with $\nash{a}_i=\{e_0\}$ and $\opt{a}_i=\{e_i\}$, while each user $i \in \{u-v+1,u\}$ has action set $\mc{A}_i = \{a_i\}$ with $a_i=\{e_0\}$. Resource $e_0$ has resource cost function $\ell_0(x)=b_j(x)$, while each resource $e_k$, $k=1,\dots,u-v$, has resource cost function $\ell_k(x)=\alpha_k b_j(x)$ where $\alpha_k=\max_{v+k \leq x \leq u} F_j(x)+\epsilon$ for $\epsilon > 0$.

Next, we prove that the game $G$ as defined above has a unique pure Nash 
equilibrium which corresponds with the assignment 
$\nash{a}=(\nash{a}_1, \dots, \nash{a}_{u-v}, a_{u-v+1}, \dots, a_u)$.
Consider the choices of user $k\in\{1,\dots,u-v\}$ with respect to any 
assignment in which all users $i \in \{1,\dots,k-1\}$ play the action 
$\nash{a}_i$.
The remaining users $i \in \{k+1,\dots,u-v\}$ play either of their actions in 
$\mc{A}_i$.
Observe that user $k$ must select either the resource $e_0$ which is currently 
selected by at least $k+l-1$ users, or the resource $e_k$ which is currently not 
selected by any other user.
It follows that user $k$ selects $\nash{a}_i=\{e_0\}$ in this scenario,
since $F_j(y) < \max_{v+k \leq x \leq u} F_j(x)+\epsilon$ with $\epsilon > 0$, 
for $y=v+k,\dots,u$.
Note that, starting from any assignment $a \in \mc{A}$,
one can repeat this argument from user $k=1$ to user $k=u-v$ to show that any 
sequence of best responses will settle on the assignment $\nash{a}$ and, thus, 
that this is the unique pure Nash equilibrium.
Note that the system cost associated with this assignment is ${\rm SC}(\nash{a})=b_j(u)u$.
Meanwhile, the system cost of the assignment $\opt{a}=(\opt{a}_1, \dots, \opt{a}_{u-v}, a_{u-v+1}, \dots, a_u)$
is ${\rm SC}(\opt{a})=b_j(v)v + \sum^{u-v}_{k=1} [ \max_{v+k\leq x \leq u} F(x) + \epsilon ]$.
Furthermore, it holds that ${\rm MinCost}(G) \leq {\rm SC}(\opt{a})$.
Thus, for $\epsilon \to 0^+$, the Price of Stability satisfies
\[ {\rm PoS}() \geq \frac{b_j(u)u}{{\rm MinCost}(G)} 
                \geq \frac{b_j(u)u}{b_j(v)v + \sum^{u-v}_{k=1} [ \max_{v+k\leq x \leq u} F_j(x) ]}. \]

We have shown that within the family of games $\mc{G}^n_T$, there exists a singleton game with a unique pure Nash equilibrium
for any $b_j$, $j=1,\dots,m$, and any pair $(u,v)$ such that $0\leq v<u\leq n$.  We also derived the lower bound on the Price of Stability for each of these games.  Observe that the maximum value of this lower bound over all $b_j$ and all valid pairs $(u,v)$ represents a lower bound on the Price of Stability, i.e.,  
\begin{align*}
    {\rm PoS}(\mc{G}^n_T) &\geq \max_j \max_{0\leq v < u \leq n} \frac{b_j(u)u}{b_j(v)v 
                        + \sum^{u-v}_{k=1} [ \max_{v+k\leq x \leq u} F_j(x) ]} \\
                    &= \max_j \max_{0\leq v < u \leq n} \frac{b_j(u)u}{b_j(v)v 
                        + \sum^{u-v}_{k=1} F^{(u,v)}_j(k)},
\end{align*} 
where we define $F^{(u,v)}_j(k) := \max_{v+k\leq x \leq u} F_j(x)$, for $k=1,\dots,u-v$, for conciseness.  It follows that, given a family of congestion games $\mc{G}^n_T$ corresponding to maximum number of users $n$, basis functions $b_1,\dots,b_m$ and local taxation rule $T$, a lower bound on the Price of Stability can be computed as ${\rm PoS}(\mc{G}^n_T) \geq \max_j \{1/\opt{\gamma}_j\}$, where $\opt{\gamma}_j$, $j=1,\dots,m$, is the optimal value of the following linear program:
\begin{align*}
    & \underset{\gamma}{\text{maximize}} \quad \gamma \quad \text{subject to:} \\
    & \gamma b_j(u)u \leq b_j(v)v + \sum^{u-v}_{k=1} F^{(u,v)}(k), \quad \forall (u,v) \in \{ (u,v)\in\bb{N}^2 \text{ s.t. } 0\leq v<u\leq n\}, \\
    & F^{(u,v)}(k) = \max_{v+k\leq x \leq u} F_j(x), \quad \forall k \in \{1,\dots,u-v\}, \forall (u,v) \in \{ (u,v)\in\bb{N}^2 \text{ s.t. } 0\leq v<u\leq n\}.
\end{align*} 
It is critical to note that we assumed $F_j(1)=1$, for $j=1,\dots,m$ in the derivation of this program.

By including the functions $F_j$, $j=1,\dots,m$, as decision variables in the above linear program, we obtain a (not necessarily convex) program for minimizing the lower bound on ${\rm PoS}(\mc{G}^n_T)$.  We can then write the following $m$ programs (one for each $b_j$) for computing the minimum lower bound on the Price of Stability achievable for a maximum allowable Price of Anarchy $\bar \Pi$ greater than or equal to the minimum achievable Price of Anarchy in $\mc{G}^n_T$, where we include the Price of Anarchy constraints from the linear program in \eqref{linprog:optimalpoa}: 
\begin{equation}
\begin{aligned}
    & \underset{F,\nu^{-1},\rho\nu^{-1},\gamma}{\text{maximize}} \quad \gamma \quad \text{subject to:} \\
    & \rho\nu^{-1} \geq \bar{\Pi}^{-1} \nu^{-1}, \quad F(1) = 1, \\
    & \nu^{-1}b_j(y)y-\rho\nu^{-1}b_j(x)x+(x-z)F(x)-(y-z)F(x+1) \geq 0, \quad \forall (x,y,z)\in\mc{I}(n), \\
    & \gamma b_j(u)u \leq b_j(v)v + \sum^{u-v}_{k=1} F^{(u,v)}(k), \quad \forall (u,v) \in \{ (u,v)\in\bb{N}^2 \text{ s.t. } 0\leq v<u\leq n\}, \\
    & F^{(u,v)}(k) = \max_{v+k\leq x \leq u} F(x), \quad \forall k \in \{1,\dots,u-v\}, \forall (u,v) \in \{ (u,v)\in\bb{N}^2 \text{ s.t. } 0\leq v<u\leq n\}. 
\end{aligned}
\end{equation}
Let $\opt{F}_j,\opt{\nu}_j,\opt{\rho}_j,\opt{\gamma}_j$ be the optimal values that solve the above program.  The Price of Anarchy achieved by the corresponding local taxation rule is ${\rm PoA}(\mc{G}^n_{\opt{T}})=\max_j\{1/\opt{\rho}_j\}$ and the resulting optimal lower bound on the Price of Stability is ${\rm PoS}(\mc{G}^n_{\opt{T}}) \geq \max_j\{1/\opt{\gamma}_j\}$. 

Note that the above program is not a convex program because of the equality constraints for the values $F^{(u,v)}(k)$.  Next, we show that solving the above program is equivalent to the problem of maximizing $\sum^n_{x=1} F(x)$ subject to the Price of Anarchy constraints.  First, for each basis function $b_j$, observe that maximizing the value of $\gamma$ is equivalent to maximizing the sum over values $F^{(u,v)}(k)$, for $k=1,\dots,u-v$, for all $(u,v)$ such that $0\leq v<u\leq n$. Second, observe that the upper bound on $F(x+1)$ imposed by $F(x)$ and $\rho$ in the constraints corresponding to the Price of Anarchy is increasing in the value of $F(x)$, i.e., 
\[ (y-z) F(x+1) \leq (x-z)F(x) - \rho\nu^{-1}b_j(x)x + \nu^{-1}b_j(y)y, 
    \quad \forall (y,z) \in \{ (y,z) \in \bb{N}^2 \text{ s.t. } (x,y,z) \in \mc{I}(n) \}, \]
since it always holds that $x-z\geq 0$ since $z\leq\min\{x,y\}$ in the definition of $\mc{I}(n)$.  Thus, maximizing the value of $F(x+1)$ corresponds with maximizing the value of $F(x)$.  Finally, maximizing the sum over values $F^{(u,v)}(k)$, for $k=1,\dots,u-v$, for all $(u,v)$ such that $0\leq v<u\leq n$, is equivalent to maximizing the sum over values $\max_{v+k\leq x \leq u} F(x)$, by definition.  We showed above that maximizing the feasible value of any given decision variable $F(\hat x)$, $1 \leq \hat x \leq n$, corresponds with maximizing the values of $F(x)$, for $x=1,\dots,\hat x-1$.  Thus, maximizing $\max_{v+k\leq x \leq u} F(x)$, for $k=1,\dots,u-v$, for all $(u,v)$ such that $0\leq v<u\leq n$, is equivalent to maximizing $F(k)$, $k=1,\dots,n$, which is equivalent to maximizing $\sum^n_{k=1} F(k)$.  It follows that the nonconvex program above yields the same solution as the linear program in \eqref{linprog:lower_bound}, concluding the proof.

\section{Arbitrary number of users} \label{sec:arbitrary}

In this section, we seek to investigate the system performance that can be obtained using a local taxation rule in settings with an arbitrary number of users.  We first observe that a lower bound on the Pareto curve follows immediately from the linear program in Section~\ref{sec:bounds}, as the Price of Anarchy and Price of Stability metrics are both nondecreasing in the maximum number of users $n$.  As the same argument does not apply to upper bounds on the two metrics, we develop a method for computing an upper bound on the Pareto curve between the Price of Anarchy and Price of Stability in polynomial congestion games with any number of users.

Before presenting our next result, we define several parameters that are necessary for extending the solution of the upcoming finite dimensional bilinear program to an arbitrary number of users.  Given integer $d\in\bb{N}_{\geq 1}$, even integer $\bar{n}\in\bb{N}_{\geq 2}$, optimal values $\opt{\nu},\opt{\rho},\opt{\gamma},\opt{\kappa}$ and function $\opt{F}:\{1,\dots,n\}\to\bb{R}$, we define:
\begin{equation}
\label{eq:definitionsFRGMarg}
\begin{aligned}
    F^\infty(x)                   &:= \begin{cases}
        \opt{F}(x) \quad & \text{if } 0 \leq x \leq \bar{n}/2, \\
        \beta [x^{d+1}-(x-1)^{d+1}] \quad & \text{if } x > \bar{n}/2
    \end{cases} \\
    \rho^\infty     &:= \min\Big\{ \opt{\rho}, \beta\opt{\nu}\frac{\bar{n}}{2}\Big[1-\Big(1-\frac{2}{\bar{n}}\Big)^{d+1}\Big] 
                                                - d \Big[\frac{\beta\opt{\nu}}{d+1} \frac{\bar{n}}{2}\Big(\Big(\frac{2}{\bar{n}}+1\Big)^{d+1}-1\Big)\Big]^{1+1/d} \Big\} \\
    \gamma^\infty   &:= \min\{ \opt{\gamma}, \gamma_1, \gamma_2 + \beta\opt{\kappa}, \gamma_3 \} 
\end{aligned}
\end{equation}
where $\beta:= \opt{F}(\bar{n}/2)/[(\bar{n}/2)^{d+1}-(\bar{n}/2-1)^{d+1}]$,
\begin{equation}\label{Margeq:gamma1}
\begin{aligned}
    \gamma_1 &:= \min_{x \in \{1,\dots,\bar{n}/2-1\}}  \\
    & \frac{1}{x^{d+1}} 
        \Big[ \hat{y}_1^{d+1}(x)
            + \opt{F}(x)x - \opt{F}(x+1)\hat{y}_1(x) - \opt{\kappa} \!\!\! \sum^{\bar{n}/2}_{k=x+1} \opt{F}(k) 
            - \beta\opt{\kappa} \Big[  \hat{y}^{d+1}(x) - \Big(\frac{\bar{n}}{2}\Big)^{d+1} \Big] \Big]  
\end{aligned}
\end{equation}
where $\hat{y}_1(x) = \max\{ \bar{n}/2+1, [\opt{F}(x+1)/(d+1)/(1-\beta\opt{\kappa})]^{1/d}\}$,
\begin{equation} \label{Margeq:gamma2}
\begin{aligned}
    \gamma_2 &:= \min_{y\in\{0,\dots,\bar{n}/2-1\}} \min_{r\geq \bar{n}/2} \\
        & \beta y + \frac{1}{r^{d+1}} \Big[ y^{d+1}
        + \beta[r^{d+1}-(r{-}1)^{d+1}]r - \beta (r{+}1)^{d+1} y
        + \opt{\kappa}\!\!\!  \sum^{\bar{n}/2-1}_{k=y+1}\opt{F}(k) - \beta\opt{\kappa}\Big(\frac{\bar{n}}{2}{-}1\Big)^{d+1} \Big] 
\end{aligned}
\end{equation}
\noindent and,
\begin{equation} \label{Margeq:gamma3}
    \gamma_3 := \min_{x \geq \bar{n}/2}
    \beta\opt{\kappa} + \frac{1}{x^{d+1}} \Big[ (1-\beta\opt{\kappa})\hat{y}_3^{d+1}(x) + \beta[x^{d+1}-(x-1)^{d+1}]x - \beta[(x+1)^{d+1}-x^{d+1}]\hat{y}_3(x) \Big] 
\end{equation}
where $\hat{y}_3(x) = \max\{ \bar{n}/2, [\beta((x+1)^{d+1}-x^{d+1})/(d+1)/(1-\beta\opt{\kappa})]^{1/d} \}$.

The following theorem presents our upper bound on the Pareto frontier between the Price of Anarchy and Price of Stability metrics in polynomial congestion games for any number of users:

\begin{theorem} \label{thm:Marg_arbitrary_users}
Consider the family of resource cost functions $\mc{L}=\text{span}(b_1)$ corresponding to basis function $b_1(x)=x^d$, of order $d\in\bb{N}_{\geq 1}$.  Further, consider a maximum allowable Price of Anarchy $\bar{\Pi}^{-1} \geq \sup_{n\geq1} {\rm MinPoA}(n,\mc{L})$.  Let $\opt{F},\opt{\nu},\opt{\rho},\opt{\gamma},\opt{\kappa}$ be optimal values that solve the following bilinear program for some even integer $\bar{n}\in\bb{N}_{\geq 2}$ and $\epsilon > 0$:
\begin{align}
    & \hspace{-25pt} \underset{F\geq 0,\nu^{-1}\geq 0,\nu^{-1}\rho,\gamma,\kappa\geq 0}{\text{maximize}} \quad \gamma \quad \text{subject to:} \nonumber \\
    & \nu^{-1}\rho \geq ({\rm PoA}^*)^{-1} \nu^{-1} \label{Margconstr:poa_feasible} \\
    & F(x+1) \geq F(x), 
        && \forall x\in\{1,\dots,\bar n-1\} \label{Margconstr:nondecreasing} \\
    & F\Big(\frac{\bar{n}}{2}\Big) \leq \nu^{-1}\Big[\Big(\frac{\bar{n}}{2}+1\Big)^{d+1}-\Big(\frac{\bar{n}}{2}\Big)^{d+1}\Big] \label{Margconstr:below_mc} \\
    & F(1)+ \frac{(d+1)(\bar{n}/2+1)^d\kappa}{(\bar{n}/2)^{d+1}-(\bar{n}/2-1)^{d+1}} F\Big(\frac{\bar{n}}{2}\Big) 
        \leq (d+1)\Big(\frac{\bar{n}}{2}+1\Big)^d \label{Margconstr:yStar} \\
    & 1 -  \frac{\kappa}{(\bar{n}/2)^{d+1}-(\bar{n}/2-1)^{d+1}} F\Big(\frac{\bar{n}}{2}\Big) \geq \epsilon \label{Margconstr:bklessthan1}\\
    & \nu^{-1} y^{d+1} - \nu^{-1}\rho x^{d+1} + xF(x) - yF(x+1) \geq 0, 
        && \forall (x,y)\in\mc{I}_{\leq \bar{n}}(\bar{n}) \label{Margconstr:poa_lessthann} \\
    & y^{d+1} - \gamma x^{d+1} + xF(x) - yF(x+1) + \kappa \Big[ \sum^x_{k=1} F(k) - \sum^y_{k=1} F(k) \Big]\geq 0, 
        && \forall (x,y)\in\mc{I}_{\leq \bar{n}}(\bar{n}) \label{Margconstr:upos_lessthann} 
\end{align}
where $\mc{I}_{\leq n}(n)$ is the set of all pairs $x,y\in\{0,\dots,n\}$ such that $1 \leq x+y \leq n$.  Then, for $F^\infty:\bb{N}\to\bb{R}$, $\rho^\infty$ and $\gamma^\infty$ defined as in \eqref{eq:definitionsFRGMarg}, the local taxation rule $T^\infty$ with $T^\infty(x^d)(x) = F^\infty(x)-x^d$ satisfies ${\rm PoA}(\mc{G}_{T^\infty})\leq 1/\rho^\infty$ and ${\rm PoS}(\mc{G}_{T^\infty})\leq 1/\gamma^\infty$. 
\end{theorem}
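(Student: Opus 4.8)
The plan is to express both guarantees as feasibility of a pair of smoothness certificates over an \emph{unbounded} index set, and then to verify those certificates by a case analysis dictated by the two branches of $F^\infty$. The argument behind Theorem~\ref{thm:upperbound} --- its primal--dual part (as in Proposition~\ref{prop:optimal_poa}) for the price of anarchy and its Proposition~\ref{prop:smoothness_argument} part for the price of stability --- shows that, for congestion games with at most $n$ users, feasibility of the associated per-resource inequalities over $\mc{I}(n)$ implies the stated bounds; this reasoning is insensitive to the value of $n$ beyond the choice of index set, so for an arbitrary number of users it is enough to exhibit $\rho^{\infty},\gamma^{\infty}>0$ such that, for $b(x)=x^{d}$, for $\opt{\nu},\opt{\kappa}$ as in~\eqref{eq:definitionsFRGMarg} (so that $(\opt{\nu})^{-1}$ is the optimal value of the bilinear-program variable $\nu^{-1}$), and for every triple of nonnegative integers $(x,y,z)$ with $z\le\min\{x,y\}$ and $x+y-z\ge1$,
\begin{align}
& (\opt{\nu})^{-1}\bigl(y^{d+1}-\rho^{\infty}x^{d+1}\bigr)+(x-z)F^\infty(x)-(y-z)F^\infty(x+1)\ge 0, \label{sk:poa}\\
& y^{d+1}-\gamma^{\infty}x^{d+1}+(x-z)F^\infty(x)-(y-z)F^\infty(x+1)+\opt{\kappa}\Bigl[\sum_{k=1}^{x}F^\infty(k)-\sum_{k=1}^{y}F^\infty(k)\Bigr]\ge 0; \label{sk:pos}
\end{align}
then~\eqref{sk:poa} certifies ${\rm PoA}(T^\infty)\le1/\rho^{\infty}$ and~\eqref{sk:pos} certifies ${\rm PoS}(T^\infty)\le1/\gamma^{\infty}$. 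Two reductions are immediate. First, $F^\infty$ is nondecreasing --- its low part by constraint~\eqref{Margconstr:nondecreasing}, the tail $\beta[x^{d+1}-(x-1)^{d+1}]$ by convexity of $x^{d+1}$, the two branches gluing monotonically at $x=\bar n/2$ by the choice of $\beta$ --- so the coefficient $F^\infty(x+1)-F^\infty(x)$ of $z$ in both~\eqref{sk:poa}--\eqref{sk:pos} is nonnegative and it suffices to check $z=0$. Second, with $z=0$ the left-hand sides are convex in $y$: for~\eqref{sk:poa} because $(\opt{\nu})^{-1}>0$, and for~\eqref{sk:pos} because constraint~\eqref{Margconstr:bklessthan1} reads exactly $1-\opt{\kappa}\beta\ge\epsilon>0$ with $\beta=\opt{F}(\bar n/2)/[(\bar n/2)^{d+1}-(\bar n/2-1)^{d+1}]$, so that after the telescoping $\sum_{k=1}^{y}F^\infty(k)=\sum_{k=1}^{\bar n/2}\opt{F}(k)+\beta[y^{d+1}-(\bar n/2)^{d+1}]$ (valid for $y>\bar n/2$) the $y^{d+1}$-coefficient stays positive. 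Hence I may relax $y$ to $[0,\infty)$ and minimise over it, the minimiser being the relevant $\hat y$ of~\eqref{eq:definitionsFRGMarg}, truncated at $\bar n/2$ or $\bar n/2+1$ exactly so as to stay on the correct branch of $F^\infty$.

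For the anarchy certificate~\eqref{sk:poa} with $z=0$, I split on the position of $x$. If $x\le\bar n/2-1$, then $F^\infty(x)=\opt{F}(x)$ and $F^\infty(x+1)=\opt{F}(x+1)$; convexity in $y$ together with the bound $y^{\star}(x)=[\opt{F}(x+1)/((d+1)(\opt{\nu})^{-1})]^{1/d}\le\bar n/2+1$ on the minimiser --- which follows from monotonicity of $\opt{F}$ and from constraint~\eqref{Margconstr:below_mc} --- reduces the check to the finite constraint~\eqref{Margconstr:poa_lessthann}, which covers every integer $y$ up to $\bar n-x\ge\bar n/2+1$; thus $\rho^{\infty}\le\opt{\rho}$ suffices here, the seam value $x=\bar n/2$ being handled by~\eqref{Margconstr:below_mc} itself. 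If $x\ge\bar n/2$, then $F^\infty(x)=\beta[x^{d+1}-(x-1)^{d+1}]$ and $F^\infty(x+1)=\beta[(x+1)^{d+1}-x^{d+1}]$; minimising $(\opt{\nu})^{-1}y^{d+1}-F^\infty(x+1)y$ over $y\ge0$, dividing by $x^{d+1}$, and noting that the outcome is monotone in $x\ge\bar n/2$ (hence worst at $x=\bar n/2$, where $\opt{F}(\bar n/2)=\beta[(\bar n/2)^{d+1}-(\bar n/2-1)^{d+1}]$ by definition of $\beta$) reproduces exactly the second entry of the $\min$ defining $\rho^{\infty}$ in~\eqref{eq:definitionsFRGMarg} --- morally, the statement that the marginal-cost tail does not inflate the price of anarchy, cf.\ Proposition~\ref{prop:optimal_pos}. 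Taking $\rho^{\infty}$ as in~\eqref{eq:definitionsFRGMarg} therefore verifies~\eqref{sk:poa} and yields ${\rm PoA}(T^\infty)\le1/\rho^{\infty}$.

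The stability certificate~\eqref{sk:pos} with $z=0$ is treated the same way, the only new ingredient being that the Rosenthal-potential term must be carried along and telescoped past $\bar n/2$, which is precisely what produces the $\beta\opt{\kappa}[\,\cdot\,]$ corrections inside~\eqref{Margeq:gamma1}--\eqref{Margeq:gamma3}. Four regimes arise, according to which branch of $F^\infty$ the arguments $x$, $x+1$, $y$ fall into: $x,y\le\bar n/2$ is exactly constraint~\eqref{Margconstr:upos_lessthann}, giving $\gamma^{\infty}\le\opt{\gamma}$; $x\le\bar n/2-1$, $y>\bar n/2$, where minimising over real $y\ge\bar n/2+1$ at $\hat y_1(x)$ --- with the $\opt{F}(1)$--$\opt{F}(\bar n/2)$ balance pinned down by constraint~\eqref{Margconstr:yStar} --- gives $\gamma_1$; $x\ge\bar n/2$, $y\le\bar n/2-1$, which gives $\gamma_2+\beta\opt{\kappa}$, the extra $\beta\opt{\kappa}$ being the term $\opt{\kappa}\beta x^{d+1}/x^{d+1}$ from the telescoped potential; and $x,y\ge\bar n/2$, where minimising over real $y\ge\bar n/2$ at $\hat y_3(x)$ gives $\gamma_3$. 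Hence $\gamma^{\infty}=\min\{\opt{\gamma},\gamma_1,\gamma_2+\beta\opt{\kappa},\gamma_3\}$, as in~\eqref{eq:definitionsFRGMarg}, verifies~\eqref{sk:pos} and yields ${\rm PoS}(T^\infty)\le1/\gamma^{\infty}$, which completes the argument.

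The main obstacle is not any single step but the bookkeeping at the seam $x=\bar n/2$: one must verify that constraints~\eqref{Margconstr:below_mc},~\eqref{Margconstr:yStar} and~\eqref{Margconstr:bklessthan1} are exactly the conditions needed so that (a) $F^\infty$ is nondecreasing, (b) the $y$-relaxed forms of both~\eqref{sk:poa} and~\eqref{sk:pos} are convex with their minimisers confined to the correct branch, and (c) $\hat y_1$ and $\hat y_3$ are the genuine \emph{constrained} minimisers (they appear as a $\max$ with $\bar n/2$ resp.\ $\bar n/2+1$ precisely because the unconstrained critical point may lie on the wrong side of the seam); and then that the cross-regime inequalities collapse exactly to~\eqref{Margeq:gamma1}--\eqref{Margeq:gamma3} and to the tail term of $\rho^{\infty}$. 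The per-regime calculus (a single convex univariate minimisation) and the telescoping of $\sum_{k\le y}F^\infty(k)$ are routine; the real work lies in identifying the worst case within each regime and matching the resulting closed forms to~\eqref{eq:definitionsFRGMarg}.
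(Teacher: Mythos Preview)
Your proposal is correct and mirrors the paper's proof: reduce to $z=0$ via monotonicity of $F^\infty$, exploit convexity in $y$ to relax to the reals, then case-split on which branch of $F^\infty$ each of $x$, $x{+}1$, $y$ falls into and match each regime to the corresponding piece of $\rho^\infty$ or $\gamma^\infty$ from~\eqref{eq:definitionsFRGMarg}. The only cosmetic differences are that the paper treats the $x+y>n$ constraints explicitly (showing them redundant given that $F^\infty$ is nondecreasing) and isolates the $x=0$, $y>\bar n/2$ sub-case --- handling it via constraint~\eqref{Margconstr:yStar} rather than folding it into the $\gamma_1$ regime as you do --- but these are precisely the seam-bookkeeping details you already flag as the crux.
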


Though the formal proof of Theorem~\ref{thm:Marg_arbitrary_users} is presented in \ref{proof:Marg_arbitrary_users}, here we provide an informal outline for the reader's convenience.  A similar approach can be used to compute upper bounds for any class of congestion games.  For congestion games with resource costs in $\mc{L}=\text{span}(x^d)$, $d\in\bb{N}_{\geq 1}$, and any number of users $n$, the proof amounts to showing that the values $\gamma^\infty, \rho^\infty, \opt{\nu}, \opt{\kappa}$ from the bilinear program in Theorem~\ref{thm:Marg_arbitrary_users} satisfy the constraints of two linear programs governing upper bounds on the Price of Anarchy and the Price of Stability, respectively. The constraints of these two linear programs are parameterized by each pair $x,y\in\{0,\dots,n\}$. We divide the proof as follows:

\vspace{.2cm}\noindent \emph{-- Upper bound on the Price of Anarchy:} 
In the first part, we show that the values $\rho^\infty,\opt{\nu}$ are feasible points of the linear program in \eqref{linprog:optimalpoa} for the function $F^\infty$ and $n$ users.  We first focus on the values $x,y$ such that $1\leq x+y\leq n$.  We show that the constraints parameterized by $0\leq x<\bar{n}/2$ and $y\geq 0$ are equivalent to constraints from the bilinear program in Theorem~\ref{thm:Marg_arbitrary_users}, leveraging the fact that $F^\infty(x)=\opt{F}(x)$ and $F^\infty(x+1)=\opt{F}(x+1)$.  Then, for $x\geq \bar{n}/2$ and $y\geq 0$, we prove that all the constraints are satisfied if $\rho^\infty$ is less than or equal to the second expression in the minimum that governs the definition of $\rho^\infty$ in \eqref{eq:definitionsFRGMarg}.  Finally, we show that the constraints with $x+y>n$ are redundant, as they are less strict than those with $1\leq x+y\leq n$.

\vspace{.2cm}\noindent \emph{-- Upper bound on the Price of Stability:}
Next, we show that the values $\gamma^\infty,\opt{\kappa}$ are feasible points of the 
linear program in Section~\ref{sec:upper_bound} for $\nu=1$, the function $F^\infty$ 
and $n$ users.
We first focus on the values $x,y$ such that $1\leq x+y\leq n$.
We show that the constraints with $0\leq x<\bar{n}/2$ and $0\leq y\leq\bar{n}/2$, 
and the constraints with $x=0$ and $y>\bar{n}/2$, are equivalent 
to constraints from the bilinear program in Theorem~\ref{thm:Marg_arbitrary_users}.
Then, we prove that $\gamma^\infty,\opt{\kappa}$ satisfy the constraints 
with $1\leq x< \bar{n}/2$ and $y>\bar{n}/2$ because $\gamma^\infty \leq \gamma_1$,
the constraints with $x\geq \bar{n}/2$ and $0\leq y<\bar{n}/2$ because $\gamma^\infty\leq \gamma_2$,
and the constraints with $x,y\geq \bar{n}/2$ because $\gamma^\infty\leq \gamma_3$, 
for $\gamma_1,\gamma_2,\gamma_3$ defined as in \eqref{Margeq:gamma1}--\eqref{Margeq:gamma3}.
Finally, we show that the constraints with $x+y>n$ are less strict than those with $1\leq x+y\leq n$.

\vspace{.2cm}

In the study of polynomial congestion games, we are often interested in the setting 
where resource cost functions have the form $\ell(x) = \sum^d_{j=0} \alpha_j x^j$ for $\alpha_j\geq 0$
for given order $d\geq 1$.
As it is stated, the result in Theorem~\ref{thm:Marg_arbitrary_users} can only accommodate resource cost functions corresponding to a single monomial basis function.
However, consider the scenario where we solve the bilinear program program in Theorem~\ref{thm:Marg_arbitrary_users} for each monomial basis functions $b_1,\dots,b_m$, under the same values $\kappa=\kappa^*$, $\nu=\nu^*$ and $\bar{\Pi}^{-1}$ greater than or equal to $\sup_{n\geq 1} {\rm MinPoA}(n,\mc{L}=\text{span}(b_1,\dots,b_m))$.  Then, the values ${\rm PoA}(\mc{G}_T) \leq \max_j \{1/\rho^\infty_j\}$ and ${\rm PoS}(\mc{G}_T) \leq \max_j \{1/\gamma^\infty_j\}$ must be valid upper bounds on the Pareto frontier between the Price of Anarchy and Price of Stability in $\mc{G}$, where $\rho^\infty_j$ and $\gamma^\infty_j$ are derived as in \eqref{eq:definitionsFRGMarg} for all $j=1,\dots,m$.  We state this consequence of Theorem~\ref{thm:Marg_arbitrary_users} in the following corollary:

\begin{corollary}
Consider the family of resource cost functions $\mc{L}=\text{span}(b_1,\dots,b_m)$ corresponding to basis function $b_j(x)=x^{d_j}$ of order $d_j\in\bb{N}_{\geq 1}$.  Further, consider a maximum allowable Price of Anarchy $\bar{\Pi} \geq \sup_{n\geq 1} {\rm MinPoA}(n,\mc{L})$.  Given even integer $\bar{n}\in\bb{N}_{\geq 2}$, $\hat \kappa \geq 0$, $\hat \nu\geq 0$ and $\epsilon > 0$, for each basis functions $b_j$, $j=1,\dots,m$, let $\opt{F}_j,\opt{\rho}_j,\opt{\gamma}_j$ denote optimal values corresponding to a solution of the corresponding bilinear program in Theorem~\ref{thm:Marg_arbitrary_users} under additional equality constraints $\kappa_j=\hat \kappa$ and $\nu_j=\hat \nu$.  Then, for $F^\infty_j:\bb{N}\to\bb{R}$, $\rho^\infty_j$ and $\gamma^\infty_j$ defined as in \eqref{eq:definitionsFRGMarg}, the local taxation rule $T^\infty$ with $T^\infty(b_j)(x) = F^\infty_j(x)-x^{d_j}$ satisfies ${\rm PoA}(\mc{G}_{T^\infty})\leq \max_j\{1/\rho^\infty_j\}$ and ${\rm PoS}(\mc{G}_{T^\infty})\leq \max_j\{1/\gamma^\infty_j\}$.
\end{corollary}

\section{Proof of Theorem~\ref{thm:Marg_arbitrary_users}} \label{proof:Marg_arbitrary_users}
It is important to note that $F^\infty$ is a nondecreasing function by the constraints in \eqref{Margconstr:nondecreasing} and by its definition in \eqref{eq:definitionsFRGMarg}. Furthermore, there is always a feasible point in the bilinear program since $\bar \Pi$ is greater than or equal to the minimum achievable Price of Anarchy and because monomials of order $d\geq 1$ are all convex and nondecreasing.  One feasible point corresponds with $\kappa=0$, $\nu=1$, and $F$ and $\rho$ solving the linear program in Proposition~\ref{prop:optimal_poa} for $\bar{n}$ users.  Observe that all the constraints in the bilinear program are satisfied because 
the function $F$ is unique and nondecreasing by Theorem~\ref{thm:optimal_poa}, $\rho \geq \bar{\Pi}^{-1}$ since $1/\rho$ is the minimum achievable Price of Anarchy (which is strictly greater than 1 for polynomials~\cite{bilo2019dynamic}) and because the tax $T(x^d)$ must be lower (pointwise) than the marginal contribution.  The last statement can be shown by virtue of our result on the best achievable lower bound on the Price of Stability in Theorem~\ref{thm:lower_bound} which showed that the lower bound decreases for larger $F$.  Since the Price of Stability of $T$ according to the lower bound will be $1/\rho > 1$ for $\bar{n}\geq 2$ 
and the Price of Stability of marginal contribution is 1, our statement must hold.

The following inequalities are useful for the proof:

Observe that, for any $x\geq \bar{n}/2$, it holds that
\begin{equation} \label{ident:xplusonetothed}
\begin{aligned}
    (x+1)^d = \sum^d_{k=0} \binom{d}{k} x^{d-k}
            \leq x^d+x^{d-1} \sum^d_{k=1} \binom{d}{k} \Big( \frac{\bar{n}}{2} \Big)^{k-1}
            = x^d+x^{d-1} \frac{\bar{n}}{2} \Big[ \Big(\frac{2}{\bar{n}}+1\Big)^d-1 \Big].
\end{aligned}
\end{equation}

We will also use the following two inequalities for sums of polynomials 
with $d\geq 1$ and $x\geq y > 0$:
\begin{align}
    \sum^x_{k=y+1} k^d &\geq \frac{1}{d+1} (x^{d+1}-y^{d+1}) + \frac{1}{2} (x^d-y^d) \label{ident:sum_greaterthan} \\
    \sum^x_{k=y+1} k^d &\leq (x-y)x^d \leq x^{d+1} - y^{d+1}. \label{ident:sum_lessthan}
\end{align}
The remainder of the proof is divided into two parts as in the informal outline 
in Section~\ref{sec:arbitrary}:

\noindent\emph{-- Upper bound on the Price of Anarchy:} 
Consider the following linear program for computing the Price of Anarchy for any arbitrary
number of users $n$ given a nondecreasing function $F$:
\begin{equation} \label{eq:linprog_poafeasible}
\begin{aligned}
    & \underset{\rho,\nu\geq 0}{\text{maximize}} \quad \rho \quad \text{subject to:} \\
    & y^{d+1}-\rho x^{d+1} + \nu [xF(x)-yF(x+1)] \geq 0, \quad \forall x,y \in \{0,\dots,n\} \text{ s.t. } 1 \leq x+y \leq n, \\
    & y^{d+1}-\rho x^{d+1} + \nu [(n-y)F(x)-(n-x)F(x+1)] \geq 0, \quad \forall x,y \in \{0,\dots,n\} \text{ s.t. } x+y > n.
\end{aligned}
\end{equation}
First, we show that the above linear program is identical to the linear program 
in \eqref{linprog:optimalpoa} for $F$ nondecreasing.
Observe that the constraints from the linear program in Section~\ref{sec:upper_bound} are
\begin{align*}
    & y^{d+1}-\gamma x^{d+1}+(x-z)F(x)-(y-z)F(x+1) \\
    =\> & y^{d+1}-\gamma x^{d+1}+xF(x)-yF(x+1) + z[F(x+1)-F(x)].
\end{align*}
Since $F(x+1)-F(x)\geq 0$, the above expression is minimized for the smallest value of $z$.
If follows that $z=0$ when $x+y\leq n$ and $z=x+y-n$ when $x+y> n$, since the triplets 
$(x,y,z) \in \mc{I}(n)$ satisfy $1\leq x+y-z\leq n$ and $z\leq \min\{x,y\}$.
Thus, for $x,y\in\{0,\dots,n\}$, $x-z=x$ and $y-z=y$ when $x+y\leq n$, while $x-z=n-y$ and $y-z=n-x$ when $x+y>n$.

We show that the values $(\rho,\nu)=(\rho^\infty,\opt{\nu})$ 
are feasible in the above linear program for $F=F^\infty$ as defined in the claim
and arbitrary $n$.
We dispense with the case where $x=0$ as, in this case,
the strictest constraint on $F^\infty(1)=\opt{F}(1)$ is at $(x,y)=(0,1)$,
which is already included by the constraints in \eqref{Margconstr:poa_lessthann}.
We first consider the constraints with $1\leq x+y\leq n$.

\noindent-- In the region where $1\leq x<\bar{n}/2$ and $0\leq y\leq \bar{n}/2$,
observe that $x+y<\bar{n}$, $F^\infty(x)=\opt{F}(x)$ and $F^\infty(x+1)=\opt{F}(x+1)$.
Then, the values $(\rho^\infty,\opt{\nu})$ are feasible in the linear program in \eqref{eq:linprog_poafeasible} 
for $F=F^\infty$ because $(\opt{F},\opt{\rho},\opt{\nu})$ satisfy the constraints in \eqref{Margconstr:poa_lessthann}, 
$F^\infty(x)=\opt{F}(x)$ for $x\leq \bar{n}/2$ and $\rho^\infty\leq\opt{\rho}$.

\noindent-- Observe that, in the region where $1\leq x<\bar{n}/2$ and $y>\bar{n}/2$,
the constraint are less strict than when $y=\bar{n}/2$ if it holds that
\begin{align*}
    & \Big(\frac{\bar{n}}{2}\Big)^{d+1}-\frac{\bar{n}}{2}\opt{\nu}\opt{F}(x+1) \leq y^{d+1}-y\opt{\nu}\opt{F}(x+1) \\
    \iff & \frac{y^{d+1}-(\bar{n}/2)^{d+1}}{y-\bar{n}/2} \geq \opt{\nu}\opt{F}(x+1).
\end{align*}
The left-hand side of the last line is minimized for $y=\bar{n}/2+1$ by convexity 
and is most constraining for $x=\bar{n}/2-1$ since $\opt{F}$ is nondecreasing.
Observe that this condition on $\opt{F}(\bar{n}/2)$ holds by the constraint in \eqref{Margconstr:below_mc}.

\noindent-- Consider the region where $x\geq \bar{n}/2$ and $y\geq 0$.
In this scenario, the constraints read as
\[ y^{d+1}-\rho^\infty x^{d+1}+\beta\opt{\nu} [x^{d+1}-(x-1)^{d+1}]x-\beta\opt{\nu} [(x+1)^{d+1}-x^{d+1}]y \geq 0. \]
Observe that the left-hand side in the above is convex in $y$ and that it is minimized over 
the nonnegative reals $y\geq 0$ at $\hat{y}=[\beta\opt{\nu} [(x+1)^{d+1}-x^{d+1}]/(d+1)]^{1/d}$.
Thus, it is sufficient to show that the following holds:
\begin{align*}
    & \rho^\infty \leq \beta\opt{\nu} \Big[x-\frac{(x-1)^{d+1}}{x^d}\Big]
        -d\frac{1}{(d+1)^{1+\frac{1}{d}}}[\beta\opt{\nu} [(x+1)^{d+1}-x^{d+1}]]^{1+\frac{1}{d}}\frac{1}{x^{d+1}} \\
    \impliedby & \rho^\infty\leq \beta\opt{\nu} x\Big[1-\Big(1-\frac{1}{x}\Big)^{d+1}\Big]
        -d\Big[\frac{\beta\opt{\nu}}{d+1}\frac{\bar{n}}{2}\Big(\Big(\frac{2}{\bar{n}}+1\Big)^{d+1}-1\Big)\Big]^{1+\frac{1}{d}},
\end{align*}
where the implication holds by the identity in \eqref{ident:xplusonetothed}.
The above inequality is strictest for $x=\bar{n}/2$ and is satisfied by the definition of $\rho^\infty$ in \eqref{eq:definitionsFRGMarg}.

Note that, in the above, we have shown that for any $x,y \geq 0$, the linear program constraints 
corresponding with $1\leq x+y\leq n$ are satisfied, without ever explicitly using the fact that 
$1\leq x+y\leq n$.
Here, we show that $(\rho^\infty,\opt{\nu})$ is feasible for the constraints with 
$x+y > n$ by observing that these are less strict than the constraints we 
have already shown to be satisfied.
Observe that this amounts to showing that
\[ \nu xF^\infty(x) - \nu yF^\infty(x+1) \leq \nu (n-y)F^\infty(x) - \nu (n-x)F^\infty(x+1) \iff \nu (x+y-n)[F(x)-F(x+1)] \leq 0. \]
This must hold, since $\nu\geq 0$, $x+y-n>0$ and $F^\infty$ is nondecreasing.

\noindent\emph{-- Upper bound on the Price of Stability:}
We continue by proving that $(\gamma,\nu,\kappa)=(\gamma^\infty,1,\opt{\kappa})$ are feasible 
in the following linear program for $F=F^\infty$ as defined in the claim and arbitrary $n$:
\begin{equation} \label{eq:linprog_uposfeasible}
\begin{aligned}
    & \underset{\gamma,\nu\geq 0,\kappa\geq 0}{\text{maximize}} \quad \gamma \quad \text{subject to:} \\
    & y^{d+1}-\gamma x^{d+1} + \nu [xF(x)-yF(x+1)] 
        + \kappa \left[ \sum^x_{k=1} F(k) - \sum^y_{k=1} F(k) \right] \geq 0, \> \forall (x,y)\in\mc{I}_{\leq n}(n), \\
    & y^{d+1}-\gamma x^{d+1} + \nu [(n-y)F(x)-(n-x)F(x+1)] 
        + \kappa \left[ \sum^x_{k=1} F(k) - \sum^y_{k=1} F(k) \right] \geq 0, \> \forall (x,y)\in\mc{I}_{>n}(n).
\end{aligned}
\end{equation}
where $\mc{I}_{\leq n}(n)$ and $\mc{I}_{>n}(n)$ are the sets of pairs $x,y\in\{0,\dots,n\}$ such that $x+y\leq n$ and $x+y>n$, respectively.
Following an identical set of arguments as the ones we used to show the equivalence 
of the linear programs in \eqref{linprog:optimalpoa} and \eqref{eq:linprog_poafeasible}
for $F$ nondecreasing, one can verify that the linear program in Section~\ref{sec:upper_bound} is 
identical to the above linear program for $F$ nondecreasing.

We first show that that $(\gamma,\nu,\kappa)=(\gamma^\infty,1,\opt{\kappa})$ is feasible  
for constraints with $1\leq x+y\leq n$.

\noindent-- In the region where $0\leq x< \bar{n}/2$ and $0\leq y\leq \bar{n}/2$, 
observe that $x+y<\bar{n}$, $F^\infty(x)=\opt{F}(x)$ and $F^\infty(x+1)=\opt{F}(x+1)$.
Then, the values $(\gamma^\infty,1,\opt{\kappa})$ are feasible in the linear program in 
\eqref{eq:linprog_uposfeasible} because $(\opt{F},\opt{\gamma},1,\opt{\kappa})$
satisfy the constraints in \eqref{Margconstr:upos_lessthann} and $\gamma^\infty\leq\opt{\gamma}$.

\noindent --In the setting where $x=0$ and $y\geq\bar{n}/2$, the following must hold:
\[ (1-\beta\opt{\kappa})y^{d+1}-\opt{F}(1)y-\opt{\kappa}\sum^{\bar{n}/2}_{k=1}\opt{F}(k)+\beta\opt{\kappa}\Big(\frac{\bar{n}}{2}\Big)^{d+1}\geq 0. \]
Since $1-\beta\opt{\kappa}>0$, it follows that the left-hand side is convex and is minimized 
over nonnegative real values $y\geq\bar{n}/2$ by
\[ \hat{y}=\max\Big\{\frac{\bar{n}}{2},\Big[\frac{\opt{F}(1)}{(d+1)(1-\beta\opt{\kappa})}\Big]^{1/d}\Big\}. \]
By Constraint~\ref{Margconstr:yStar}, it holds that $\hat{y}=\bar{n}/2$, and the corresponding 
linear program condition is covered in Constraint~\ref{Margconstr:upos_lessthann}.

\noindent --Consider the scenario where $1\leq x\leq \bar{n}/2-1$ and $y\geq\bar{n}/2+1$, where we require that
\[ (1-\beta\opt{\kappa})y^{d+1}
    -\gamma^\infty x^{d+1}
    +\opt{F}(x) x
    -\opt{F}(x+1) y
    -\opt{\kappa} \sum^{\bar{n}/2}_{k=x+1} \opt{F}(k)
    +\beta \opt{\kappa} \Big(\frac{\bar{n}}{2}\Big)^{d+1} \geq 0. \]
Observe that the left-hand side is convex since $1-\beta\opt{\kappa}>0$ by the condition in \eqref{Margconstr:bklessthan1}
and is minimized for nonnegative real values $y\geq\bar{n}/2+1$ at
\[ \hat{y}=\max\Big\{\frac{\bar{n}}{2}+1,\Big[\frac{\opt{F}(x+1)}{(d+1)(1-\beta\opt{\kappa})}\Big]^{1/d}\Big\}. \]
Observe that the resulting linear program conditions are satisfied 
for $y=\hat{y}$ and for all $1\leq x\leq \bar{n}/2$ since $\gamma^\infty \leq \gamma_1$,
for $\gamma_1$ as defined in \eqref{Margeq:gamma1}.

\noindent --We now consider the setting where $x\geq \bar{n}/2$ and $0\leq y<\bar{n}/2$.
Here we require:
\begin{align*}
    \gamma^\infty x^{d+1} & \leq y^{d+1}
    + \beta[x^{d+1}-(x-1)^{d+1}]x - \beta[(x+1)^{d+1}-x^{d+1}]y \\
    & \hspace*{100pt} + \opt{\kappa} \sum^{\bar{n}/2-1}_{k=y+1}\opt{F}(k) + \beta\opt{\kappa}\Big(x^{d+1}-\Big(\frac{\bar{n}}{2}-1\Big)^{d+1}\Big).
\end{align*}
Observe that the resulting linear program constraints are satisfied for all $0\leq y<\bar{n}/2$ 
since $\gamma^\infty \leq \gamma_2$ as defined in \eqref{Margeq:gamma2}.

\noindent --For $x,y\geq \bar{n}/2$, we require 
\[ \gamma^\infty x^{d+1}\leq y^{d+1} 
    + \beta[x^{d+1}-(x-1)^{d+1}]x - \beta[(x+1)^{d+1}-x^{d+1}]y
    + \beta\opt{\kappa} (x^{d+1}-y^{d+1}). \]
The left-hand side is convex in $y$ as $1-\beta\opt{\kappa}>0$ and is minimized 
over the nonnegative reals $y\geq \bar{n}/2$ by
\[ \hat{y} = \max\Big\{\frac{\bar{n}}{2}, \Big[\frac{\beta[(x+1)^{d+1}-x^{d+1}]}{(d+1)(1-\beta\opt{\kappa})}\Big]^{\frac{1}{d}}\Big\}. \]
The resulting linear program constraints are satisfied as $\gamma^\infty \leq \gamma_3$
as defined in \eqref{Margeq:gamma3}.

Observe that in the above, we have not explicitly used the fact that $1\leq x+y\leq n$.
Thus, as we did for the upper bound on the Price of Anarchy, here we prove that the constraints 
with $x+y > n$ are less strict that the constraints with $1\leq x+y\leq n$ for $(\opt{gamma},1,\opt{\kappa})$.
In fact, this amounts to showing once more that
\[ \nu xF^\infty(x) - \nu yF^\infty(x+1) \leq \nu (n-y)F^\infty(x) - \nu (n-x)F^\infty(x+1) \iff \nu (x+y-n)[F(x)-F(x+1)] \leq 0. \]
This must hold, since, $\nu=1>0$, $x+y-n>0$ and $F^\infty$ is nondecreasing.

\section{Computing a lower bound on the attainable joint performance guarantees} \label{appendix:gamebygame_lowerbound}

Given the maximum number of users $n$, consider a game parameterization corresponding with a $(8^n)\times(3n)$ table $\mc{R}$ whose rows are all the unique permutations of $3n$-long binary vectors $\mathbf{r}\in\{0,1\}^{3n}$.  Under this parameterization, each row $\mathbf{r}\in\mc{R}$ corresponds with a different resource, and, collectively, the rows encode the users' actions $a^\text{w-ne}_i, a^\text{b-ne}_i, a^\text{opt}_i$ as follows: 

Consider the resource $e$ corresponding with the row $\mathbf{r}\in\mc{R}$. For any $i\in\{1,\dots,n\}$, if $\mathbf{r}_i=1$, then $e \in a^\text{w-ne}_i$, else $e \notin a^\text{w-ne}_i$; if $\mathbf{r}_{n+i}=1$, then $e \in a^\text{b-ne}_i$, else $e \notin a^\text{b-ne}_i$; and, if $\mathbf{r}_{2n+i}=1$, , then $e \in a^\text{opt}_i$, else $e \notin a^\text{opt}_i$.  The coefficients in the basis representation of the $8^n$ resource cost functions will be the decision variables of our final linear program (i.e., there are $8^n\times m$ decision variables).

Recall from Section \ref{sec:gamebygame_lowerbound} that a lower bound on the Price of Stability of $\mc{G}^\tau_T$ can be computed as
\begin{equation} \label{eq:gamebygame_lowerbound_unparam}
\begin{aligned}
    \underset{G\in\mc{G}^\tau_T}{\text{maximize}} \quad &{\rm SC}(a^\text{b-ne}) \quad \text{subject to:} \\
    &{\rm SC}(a^\text{opt}) = 1, \quad {\rm SC}(a^\text{w-ne}) = \tau, \\
    &C_i(a^\text{w-ne})-C_i(a'_i,a^\text{w-ne}_{-i}) \leq 0, \quad \forall a'_i \in \{a^\text{b-ne}_i, a^\text{opt}_i\}, \quad \forall i\in\{1,\dots,n\}, \\
    &C_i(a^\text{b-ne}_{1:i},a'_{i+1:n})-C_i(a^\text{b-ne}_{1:i-1},a'_{i:n}) < 0, \quad\forall a' \in \mc{A}\setminus \{a^\text{w-ne}\}, \quad \forall i\in\{1,\dots,n\}.
\end{aligned}
\end{equation}
Recast under the above parameterization, we observe that the problem of computing the lower bound in \eqref{eq:gamebygame_lowerbound_unparam} can be formulated as the following linear program:
\begin{equation} \label{eq:lowerbound_lp_gamebygame}
\begin{aligned}
    \underset{\alpha_{e,j} \geq 0}{\text{maximize}} \quad &\sum_{e\in\mc{E}} \sum^m_{j=1} \alpha_{e,j} \cdot |a^\text{b-ne}|_e b_j(|a^\text{b-ne}|_e) \quad \text{subject to:} \\
    &\sum_{e\in\mc{E}} \sum^m_{j=1} \alpha_{e,j} \cdot |a^\text{opt}|_e b_j(|a^\text{opt}|_e) = 1, \quad
    \sum_{e\in\mc{E}} \sum^m_{j=1} \alpha_{e,j} \cdot |a^\text{w-ne}|_e b_j(|a^\text{w-ne}|_e) = \tau, \\
    &\sum_{e \in a^\text{w-ne}_i \setminus a'_i} \sum^m_{j=1} \Big[ \alpha_{e,j} \cdot F_j(|a^\text{w-ne}|_e) \Big] - \sum_{e \in a'_i \setminus a^\text{w-ne}_i} \sum^m_{j=1} \Big[ \alpha_{e,j} \cdot F_j(|a^\text{w-ne}|_e+1) \Big] \leq 0, \\
    &\hspace{200pt}\forall a'_i \in \{a^\text{b-ne}_i, a^\text{opt}_i\}, \quad \forall i\in\{1,\dots,n\}, \\
    &\sum_{e\in a^\text{b-ne}_i \setminus a'_i} \sum^m_{j=1} \Big[ \alpha_{e,j} \cdot F_j(|a^\text{b-ne}_{1:i},a'_{i+1:n}|_e) \Big] - \sum_{e\in a'_i \setminus a^\text{b-ne}_i} \sum^m_{j=1} \Big[ \alpha_{e,j} \cdot F_j(|a^\text{b-ne}_{1:i-1},a'_{i:n}|_e) \Big] < 0, \\
    &\hspace{200pt}\forall a' \in \mc{A}\setminus \{a^\text{w-ne}\}, \quad \forall i\in\{1,\dots,n\},
\end{aligned}
\end{equation}
where we use $\mc{E}$ to denote the set of resources corresponding with the rows of $\mc{R}$, and $a'_i\setminus a_i = a'_i \setminus (a'_i\cap a_i)$ denotes the resources that user $i$ selects in $a'_i$ and not in $a_i$, e.g., if $e\in a^\text{w-ne}_i\setminus a^\text{b-ne}_i$, then the corresponding binary vector $\mathbf{r}$ must have $\mathbf{r}_i=1$ and $\mathbf{r}_{n+i}=0$.  For given joint action $a$, observe that the quantity $|a|_e$ is simply the number of 1's in the appropriate columns of the binary vector $\mathbf{r}$ corresponding with $e$, i.e., $|a|_e = \sum^n_{i=1} \mathbf{r}_{nj_i+i}$ where $j_i = 0$ if $a_i=a^\text{w-ne}_i$, $j_i = 1$ if $a_i=a^\text{b-ne}_i$ and $j_i = 2$ if $a_i=a^\text{opt}_i$.  Note that the objective and constraints in \eqref{eq:gamebygame_lowerbound_unparam} and \eqref{eq:lowerbound_lp_gamebygame} coincide for games with a maximum number of users $n$.

As mentioned, the number of decision variables in \eqref{eq:lowerbound_lp_gamebygame} grows as $8^n\times m$ in the maximum number of users $n$ and number of basis functions $m$, and, thus, the linear program is not tractable for large $n$.  Nonetheless, lower bounds for small $n$ can be computed within a reasonable amount of time, as provided in Figure \ref{fig:gamebygame}.

\end{appendix}

\end{document}